%% Commands for TeXCount
%TC:macro \cite [option:text,text]
%TC:macro \citep [option:text,text]
%TC:macro \citet [option:text,text]
%TC:envir table 0 1
%TC:envir table* 0 1
%TC:envir tabular [ignore] word
%TC:envir displaymath 0 word
%TC:envir math 0 word
%TC:envir comment 0 0
%%
%%
%% The first command in your LaTeX source must be the \documentclass
%% command.
%%
%% For submission and review of your manuscript please change the
%% command to \documentclass[manuscript, screen, review]{acmart}.
%%
%% When submitting camera ready or to TAPS, please change the command
%% to \documentclass[sigconf]{acmart} or whichever template is required
%% for your publication.
%%
%%
\documentclass[acmsmall,nonacm]{acmart}

\usepackage[capitalize,noabbrev,nameinlink]{cleveref}
\usepackage{listings}
\usepackage{microtype}

%%
%% \BibTeX command to typeset BibTeX logo in the docs
\AtBeginDocument{%
  }

%% Rights management information.  This information is sent to you
%% when you complete the rights form.  These commands have SAMPLE
%% values in them; it is your responsibility as an author to replace
%% the commands and values with those provided to you when you
%% complete the rights form.
\setcopyright{acmcopyright}
\copyrightyear{2023}
\acmYear{2023}
\acmDOI{XXXXXXX.XXXXXXX}

%% These commands are for a PROCEEDINGS abstract or paper.
\acmConference[PODC 2023]{ACM Symposium on Principles of Distributed Computing}{June 19--23, 2023}{Orlando, FL}
%%
%%  Uncomment \acmBooktitle if the title of the proceedings is different
%%  from ``Proceedings of ...''!
%%
%%\acmBooktitle{Woodstock '18: ACM Symposium on Neural Gaze Detection,
%%  June 03--05, 2018, Woodstock, NY}
\acmPrice{15.00}
\acmISBN{978-1-4503-XXXX-X/18/06}

%%
%% Submission ID.
%% Use this when submitting an article to a sponsored event. You'll
%% receive a unique submission ID from the organizers
%% of the event, and this ID should be used as the parameter to this command.
%%\acmSubmissionID{123-A56-BU3}

%%
%% For managing citations, it is recommended to use bibliography
%% files in BibTeX format.
%%
%% You can then either use BibTeX with the ACM-Reference-Format style,
%% or BibLaTeX with the acmnumeric or acmauthoryear sytles, that include
%% support for advanced citation of software artefact from the
%% biblatex-software package, also separately available on CTAN.
%%
%% Look at the sample-*-biblatex.tex files for templates showcasing
%% the biblatex styles.
%%

%%
%% The majority of ACM publications use numbered citations and
%% references.  The command \citestyle{authoryear} switches to the
%% "author year" style.
%%

\newcommand{\mypar}[1]{\noindent\textbf{#1}}
\newcommand{\myskip}{\vspace{2pt}}

\definecolor{Azure1}{rgb}{0.95,1,1}

\lstdefinestyle{mystyle}{
    backgroundcolor=\color{Azure1},
    columns=fixed,
    basicstyle=\ttfamily\small,
    basewidth=0.5em,
    breakatwhitespace=false,
    breaklines=true,
    captionpos=b,
    frame=single,
    keepspaces=true,
    numbers=left,
    numberstyle=\small,
    numbersep=5pt,
    showspaces=false,
    showstringspaces=true,
    showtabs=false,
    tabsize=2
}

\lstdefinestyle{mystyle2}{
    backgroundcolor=\color{Azure1},
    columns=fixed,
    basicstyle=\ttfamily\tiny,
    basewidth=0.55em,
    breakatwhitespace=false,
    breaklines=true,
    captionpos=b,
    frame=single,
    keepspaces=true,
    numbers=left,
    numberstyle=\tiny,
    numbersep=5pt,
    showspaces=false,
    showstringspaces=true,
    showtabs=false,
    tabsize=2
}

\lstset{style=mystyle}

\crefname{lstlisting}{Listing}{Listings}

\newcommand{\al}[2]{\cref{l:#1}, Line~#2}
\newcommand{\als}[2]{\cref{l:#1}, Lines~#2}

%%
%% end of the preamble, start of the body of the document source.
\begin{document}

%%
%% The "title" command has an optional parameter,
%% allowing the author to define a "short title" to be used in page headers.
\title{Efficient Computation in Congested Anonymous Dynamic Networks}

%%
%% The "author" command and its associated commands are used to define
%% the authors and their affiliations.
%% Of note is the shared affiliation of the first two authors, and the
%% "authornote" and "authornotemark" commands
%% used to denote shared contribution to the research.
\author{Giuseppe A. Di Luna}
\authornote{Both authors contributed equally to this research.}
\email{g.a.diluna@gmail.com}
\affiliation{%
  \institution{DIAG, Sapienza University of Rome}
  \city{Rome}
  \country{Italy}
}

\author{Giovanni Viglietta}
\authornotemark[1]
\email{viglietta@gmail.com}
\orcid{0000-0001-6145-4602}
\affiliation{%
  \institution{Department of Computer Science and Engineering, University of Aizu}
  \city{Aizuwakamatsu}
  \country{Japan}
}

%%
%% By default, the full list of authors will be used in the page
%% headers. Often, this list is too long, and will overlap
%% other information printed in the page headers. This command allows
%% the author to define a more concise list
%% of authors' names for this purpose.
\renewcommand{\shortauthors}{G.\,A. Di Luna and G. Viglietta}

\begin{abstract}
An \emph{anonymous dynamic network} is a network of indistinguishable processes whose communication links may appear or disappear unpredictably over time. Previous research has shown that deterministically computing an arbitrary function of a multiset of input values given to these processes takes only a linear number of communication rounds (Di Luna--Viglietta, FOCS~2022).

However, fast algorithms for anonymous dynamic networks rely on the construction and transmission of large data structures called \emph{history trees}, whose size is polynomial in the number of processes. This approach is unfeasible if the network is \emph{congested}, and only messages of logarithmic size can be sent through its links. Observe that sending a large message piece by piece over several rounds is not in itself a solution, due to the anonymity of the processes combined with the dynamic nature of the network. Moreover, it is known that certain basic tasks such as all-to-all token dissemination (by means of single-token forwarding) require $\Omega(n^2/\log n)$ rounds in congested networks (Dutta et al., SODA~2013).

In this work, we develop a series of practical and efficient techniques that make it possible to use history trees in congested anonymous dynamic networks. Among other applications, we show how to compute arbitrary functions in such networks in $O(n^3)$ communication rounds, greatly improving upon previous state-of-the-art algorithms for congested networks.
\end{abstract}

%%
%% The code below is generated by the tool at http://dl.acm.org/ccs.cfm.
%% Please copy and paste the code instead of the example below.
%%
\begin{CCSXML}
<ccs2012>
   <concept>
       <concept_id>10003752.10003809.10010172</concept_id>
       <concept_desc>Theory of computation~Distributed algorithms</concept_desc>
       <concept_significance>500</concept_significance>
       </concept>
   <concept>
       <concept_id>10010147.10010919.10010172</concept_id>
       <concept_desc>Computing methodologies~Distributed algorithms</concept_desc>
       <concept_significance>500</concept_significance>
       </concept>
 </ccs2012>
\end{CCSXML}

\ccsdesc[500]{Theory of computation~Distributed algorithms}
\ccsdesc[500]{Computing methodologies~Distributed algorithms}

%%
%% Keywords. The author(s) should pick words that accurately describe
%% the work being presented. Separate the keywords with commas.
\keywords{anonymous dynamic network, congested network, history tree}

%\received{20 February 2007}
%\received[revised]{12 March 2009}
%\received[accepted]{5 June 2009}

\maketitle

\section{Introduction}\label{s:1}

\mypar{Dynamic networks.} In recent years, distributed computing has seen a remarkable increase in research on the algorithmic aspects of networks that constantly change their topology~\cite{CFQS12,KO,MS18}. The study of these \emph{dynamic networks} is motivated by technologies such as wireless sensors networks, software-defined networks, and networks of smart devices. Typically, the distributed system consists of $n$ processes that communicate with each other in synchronous rounds. At each round, the network topology is rearranged arbitrarily, and communication links appear or disappear unpredictably.

\myskip
\mypar{Anonymity and leadership.} There are efficient algorithms for various tasks that work under the assumption that processes have \emph{unique IDs}~\cite{CFMS15,JY24,KLO11,KLO10,KOM11,MS18,DW05}. However, unique IDs may not be available due to operational limitations~\cite{DW05} or to protect user privacy; for instance, assigning temporary random IDs to users of COVID-19 tracking apps was not sufficient to eliminate privacy concerns~\cite{SM20}. Systems where processes are indistinguishable are called \emph{anonymous}.

It is known that many fundamental problems for anonymous networks cannot be solved without additional ``symmetry-breaking'' assumptions: A notable example is the \emph{Counting problem}, i.e., determining the total number of processes $n$. The most typical symmetry-breaking choice is assuming the presence of a single distinguished process in the system, called \emph{leader}~\cite{AAE08,ABBS16,BBCD15,BBK11,DFIISV19,FPP00,KM18,KM20,MCS13,Sa99,YK96}. A leader process may represent a base station in a sensor network, a super-node in a P2P network, etc.

\myskip
\mypar{Disconnected networks.} Another common assumption is that the network is connected at every round~\cite{KLO10, DW05}. However, this assumption appears somewhat far-fetched when one considers the highly dynamic nature of some real-world networks, such as P2P networks of smart devices moving unpredictably. A weaker and more reasonable assumption is that the union of all the network's links across any $T$ consecutive rounds induces a connected (multi)graph on the processes~\cite{KMarxiv,OT09}. Such a network is said to be \emph{$T$-union-connected}, and $T\geq 1$ is its \emph{dynamic disconnectivity}~\cite{DVdisc}. Typically, it is assumed $T$ (or an upper bound thereof) to be known by the processes.

\myskip
\mypar{Congested networks.} Almost all previous research on anonymous dynamic networks pertains to models that impose no limit on the size of messages exchanged by processes~\cite{DB15,DV22,KM18,KM19,KM20,KM21,KM22,DW05,OT09}. Unfortunately, in most mobile networks, sending small-size messages is not only desirable but also a necessity; for example, in sensor networks, short communication times significantly increase battery life. A more realistic model assumes the network to be ``congested'' and limits the size of every message to $O(\log n)$ bits, where $n$ is the number of processes~\cite{PD00}.\footnote{This $O(\log n)$ limit on message sizes does not imply that the processes have a-priori information about $n$. The size limit is not explicitly given to the processes, and it is up to the algorithm to automatically prevent larger messages from being sent.}

\myskip
\mypar{General computation.} A recent innovation in the study of anonymous dynamic networks with leaders was the introduction of \emph{history trees} in~\cite{DV22}, which led to an optimal deterministic solution to the \emph{Generalized Counting problem}\footnote{In the Generalized Counting problem, each process starts with a certain input, and the goal is to determine how many processes have each input. That is, each process has to compute the \emph{multiset} of all inputs.} in the non-congested network model. This problem is ``complete'' for a large class of functions called \emph{multiset-based functions}, which in turn are the only computable functions in this model. The theory of history trees was extended in~\cite{DVdisc} to \emph{leaderless} networks, providing optimal algorithms for the \emph{Frequency problem}:\footnote{In the Frequency problem, the goal is to determine the percentage of processes that have each input.} This problem is complete for the class of \emph{frequency-based functions}, which are the only computable functions in leaderless systems. Thus, the computational landscape for the non-congested network model is fully understood, and optimal linear-time algorithms are known for anonymous dynamic systems with and without leaders.\footnote{By the word ``optimal'' we mean ``asymptotically worst-case optimal as a function of the total number of processes $n$''.} No previous research exists on anonymous networks in the congested model, except for a recent preprint that gives a Counting algorithm in $\widetilde{O}(n^{2T(1+\epsilon)+3})$ rounds for networks with leaders~\cite{KMarxiv}. Note that its running time is exponential in the dynamic disconnectivity $T$ and becomes $\widetilde{O}(n^{5+\epsilon})$ for connected networks.

\subsection{Contributions and Techniques}\label{s:1.1}

\mypar{Summary.} In this paper, we provide a state-of-the-art general algorithmic technique for $T$-union-connected anonymous dynamic \emph{congested} networks, with and without leaders. The resulting algorithms run in $O(Tn^3)$ rounds, where $n$ is the (initially unknown) total number of processes, and the dynamic disconnectivity $T$ is initially known.

In \cref{s:3} we give a basic and slightly inefficient Counting algorithm that applies to a limited setting but already contains all of the key ideas of our technique. In \cref{s:4} we prove its correctness, and in \cref{s:5} we discuss optimizations and extensions of the basic algorithm to several other settings. \textbf{This includes the computation of all multiset-based functions.} \cref{s:6} concludes the paper with some directions for future research.

Preliminary conference versions of this paper appeared at PODC~2023~\cite{prev01} (as a Brief Announcement) and MFCS~2024~\cite{prev02}. This version contains additional technical details and all missing proofs.

\myskip
\mypar{Technical background.} Informally, a \emph{history tree} is a way of representing the history of a network in the form of an infinite tree. Each node in a history tree represents a set of anonymous processes that are ``indistinguishable'' at a certain round, where two processes become ``distinguishable'' as soon as they receive different sets of messages (see \cref{s:2} for proper definitions).

The theory of history trees developed in~\cite{DV22, DVdisc} yields optimal general algorithms for anonymous dynamic networks with and without leaders, assuming the network is not congested. The idea is that processes can work together to incrementally construct the history tree by repeatedly exchanging and merging together their respective ``views'' of it. Once they have a sufficiently large portion of the history tree (i.e., a number of ``levels'' proportional to $n$), each process can locally analyze its structure and perform arbitrary computations on the multiset of input values originally assigned to the processes.

\myskip
\mypar{Challenges.} Unfortunately, implementing the above idea requires sending messages containing entire ``views'' of the history tree. The size of a view is $\Theta(n^3\log n)$ bits in the worst case, and is therefore unsuitable for the congested network model~\cite{DV22}. There is a major difficulty in dealing with this problem deterministically, which stems from the lack of unique IDs combined with the dynamic nature of the network.

It is worth noting that the ``naive'' approach of breaking down large messages into smaller pieces to be sent in different rounds does not work. Indeed, it is not clear how the original message can then be reconstructed, because the pieces carry no IDs and a process' neighbors may change at every round. This may result in messages from different processes being mixed up and non-existent messages being reconstructed.

\myskip
\mypar{Methodology.} Our main contribution is a general method that allows history trees to be transmitted reliably and deterministically between anonymous processes in a dynamic \emph{congested} network with a leader. To overcome the fundamental issues outlined above, we devised a basic protocol combining different techniques, as well as a number of extensions, including leaderless ones. Although the techniques introduced in this paper are self-contained and do not rely on the results of~\cite{DV22}, they effectively allow us to reduce the congested network model to the non-congested one, making it possible to apply the Counting algorithm in~\cite{DV22} as a ``black box''.

Firstly, we developed a method for dynamically assigning temporary (non-unique) IDs to processes; this method is an essential part of the history tree transmission algorithm. In fact, the nodes of our history trees are now augmented with IDs, meaning that each node represents the set of processes with a certain ID. When processes with equal IDs get disambiguated, they get new IDs.

The transmission of history trees occurs level by level, one edge at a time. Since the total ordering between IDs induces a total ordering on the history tree's edges, the processes can collectively transmit sets of edges with a method reminiscent of \emph{Token Dissemination}~\cite{KLO10}.

Essentially, all processes participate in a series of broadcasts; the goal of each broadcast is to transmit the next ``highest-value'' edge to the whole network. The problem is that no upper bound on the \emph{dynamic diameter} of the network is known, and therefore there is no way of knowing how many rounds it may take for all processes to receive the edge being broadcast.

We adopt a self-stabilizing approach to ensure that all messages are successfully broadcast. We give a communication protocol based on acknowledgments by the leader, where failure to broadcast a message alerts at least one process. Alerted processes start broadcasting error messages, which eventually cause a reset of the broadcast that caused the error. A mechanism that dynamically estimates the diameter of the network guarantees that no more than $O(\log n)$ resets are performed.

Finally, in order to achieve a cubic running time, we do not construct the history tree of the actual network, but a more compact history tree corresponding to a \emph{virtual network}. The virtual network is carefully derived from the real one in such a way as to amortize the number of edges in the resulting history tree and further reduce the final worst-case running time by a factor of $n$.

\subsection{Previous Work}\label{s:1.2}

\mypar{Non-congested networks.} The study of computation in anonymous dynamic networks has been mainly devoted to two fundamental problems: The \emph{Counting problem} in networks with a leader~\cite{DB15,DBBC14b,DBCB13,KM18,KM19,KM20,KM21,KM22} and the \emph{Average Consensus problem} in leaderless networks~\cite{BT89,CL18,CL22,C11,KM21,NOOT09,O17,OT11,T84,YSSBG13}. This research thread produced a series of algorithms for these problems; the underlying technique used is a local averaging or mass-distribution method coupled with refined termination strategies.

A radically different technique based on history trees was recently used to optimally compute arbitrary multiset-based functions in $3n$ rounds in networks with a leader~\cite{DV22}. This approach was successfully extended to multi-leader, leaderless, and disconnected networks~\cite{DVdisc}.

\myskip
\mypar{Congested networks.} As for the congested model, the only paper that has ever studied deterministic algorithms for anonymous dynamic networks, to the best of our knowledge, is the recent preprint~\cite{KMarxiv}, which solves the \emph{Generalized Counting problem} in $\widetilde{O}(n^{2T(1+\epsilon)+3})$ rounds. As usual, $T$ is the (initially known) dynamic disconnectivity of the network and $\epsilon$ is an arbitrarily small positive constant. By comparison, our main algorithm has a running time of $O(Tn^3)$ rounds; hence, its dependence on $T$ is linear (as opposed to exponential) and, for connected networks (i.e., when $T=1$), the improvement is a factor of $\Theta(n^{2+\epsilon}\log^k n)$.

Most previous research efforts on congested networks in the dynamic setting have focused on randomized algorithms or processes with unique IDs~\cite{DPR13,HF12,JY24,KLO10}. In this context, a problem similar to Generalized Counting is \emph{Token Dissemination}, where each process starts with a token and the goal is for every process to collect all tokens. In connected networks, this problem is solved in $O(n^2)$ rounds by a simple \emph{token-forwarding} algorithm (i.e., no manipulation is done on tokens other than storing, copying, and individually transmitting them)~\cite{KLO10}. Interestingly, solving the Token Dissemination problem by token-forwarding algorithms requires at least $\Omega(n^2/\log n)$ rounds~\cite{DPR13}.

It is worth remarking that the randomized algorithm in~\cite{KLO10} only solves the Counting problem approximately and assumes a-priori knowledge of an upper bound on $n$. Moreover, this algorithm only works with high probability. These are three key differences that make our contribution preferable. Furthermore, assuming processes to have unique IDs (or randomly generating unique IDs) as in~\cite{KLO10} defeats the purpose of safeguarding user privacy, which is a motivation of our work.

\section{Definitions and Fundamentals}\label{s:2}

\mypar{Computation model.} A \emph{dynamic network} is modeled by an infinite sequence $\mathcal G=(G_t)_{t\geq 1}$, where $G_t=(V,E_t)$ is an undirected multigraph whose vertex set $V=\{p_1, p_2, \dots, p_n\}$ is a system of $n$ \emph{anonymous processes} and $E_t$ is a multiset of edges representing \emph{links} between processes. Hence, there may be multiple links between two processes, or even from a process to itself.\footnote{Each self-loop in $G_t$ represents a single link, hence a single message being sent and received by the same process.}

If there is a constant $T\geq 1$ such that, for every $t\geq 1$, the multigraph $G^\star_t=\left(V, \bigcup_{i=t}^{t+T-1} E_i\right)$ is connected, the network is said to be \emph{$T$-union-connected}, and the smallest such $T$ is its \emph{dynamic disconnectivity}.\footnote{A similar parameter for dynamic networks is the \emph{dynamic diameter} $D$, defined as the maximum number of rounds it may take for a message to be broadcast from a process to all other processes. The parameters $T$ and $D$ are related by the inequalities $T \leq D \leq T(n-1)$, which are best possible~\cite{DVdisc}.}

Each process $p_i$ starts with an \emph{input}, which is assigned to it at \emph{round~$0$}. It also has an internal state, which is initially determined by its input. At each \emph{round~$t\geq 1$}, every process composes a message (as a function of its internal state) and sends it to its neighbors in $G_t$ through all its incident links.\footnote{Contrary to static networks, where processes may be allowed to send different messages to different neighbors, dynamic networks usually require processes to ``broadcast'' the same message through all incident links due to the lack of unique port numbers. This is the case, for instance, in wireless radio communications, where messages are sent to all processes within communication range, and the anonymity of the network prevents destinations from being specified.} In the congested network model, only messages of $O(\log n)$ bits can be sent.\footnote{We may want to assume that the total number of links in a network multigraph $G_t$ (counted with their multiplicities) is bounded by a polynomial in $n$. In fact, when dealing with congested networks, we explicitly make this assumption, as it ensures that the multiplicity of a link can always fit in a single $O(\log n)$-sized message.} By the end of round~$t$, each process reads all messages coming from its neighbors and updates its internal state according to a local algorithm $\mathcal A$. Note that $\mathcal A$ is the same for all processes, and is a deterministic function of the internal state and the multiset of messages received in the current round.

The input of each process also includes a \emph{leader flag}. In \cref{s:3}, we will assume that the leader flag of exactly one process is set (this process is the \emph{unique leader}); in \cref{s:5}, we will discuss the \emph{leaderless} case, where none of the processes has the leader flag set.

A process may return an \emph{output} at the end of a round, which must be a function of its current internal state. A process may also \emph{terminate} execution after returning an output. An algorithm $\mathcal A$ solves the \emph{Counting problem} if executing $\mathcal A$ at every round eventually causes all processes to simultaneously output $n$ and terminate.\footnote{Note that some definitions of the Counting problem found in the literature do not require all processes to terminate at the same time. In fact, in \cref{s:3}, for the sake of simplicity, we will consider the Counting problem solved as soon as the leader outputs $n$. We will discuss the full-fledged Counting problem, as well as the computation of arbitrary functions, in \cref{s:5}.} The (worst-case) \emph{running time} of $\mathcal A$, as a function of $n$, is the maximum number of rounds it takes for $\mathcal A$ to solve the problem, across all possible dynamic networks of size $n$ and all possible input assignments.

\begin{figure}
\centering
\includegraphics[scale=0.5]{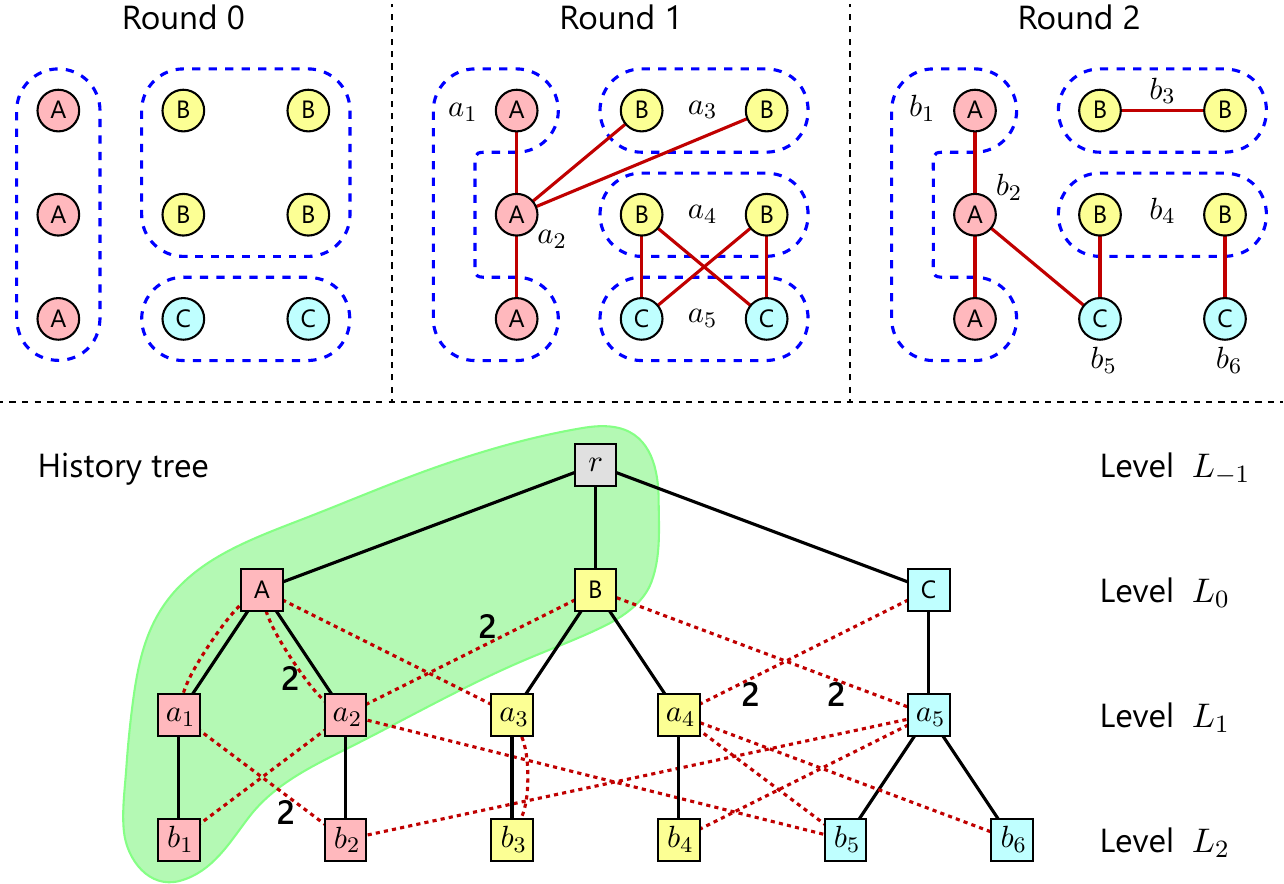}
\caption{The first rounds of a dynamic network with $n=9$ processes and the corresponding levels of the history tree. Level $L_t$ consists of the nodes at distance $t+1$ from the root $r$, which represent indistinguishable processes after the $t$th communication round. There are no leaders in the network, but each process has an input from the set $\{A,B,C\}$. Only the nodes in $L_0$ have explicit labels; all labels of the form $a_i$ and $b_i$ were added for the reader's convenience, and indicate classes of indistinguishable processes (in contrast, the nodes of the \emph{virtual history tree} introduced in \cref{s:3} do have IDs). Note that the two processes in $b_4$ are still indistinguishable at the end of round~$2$, although they are linked to the distinguishable processes $b_5$ and $b_6$. This is because such processes were in the same class $a_5$ at round~$1$. The subgraph induced by the nodes in the green blob is the \emph{view} of the two processes in $b_1$. We remark that a history tree does not contain any explicit information about how many processes each node represents.}
\label{fig:ht1}
\Description{Example of a history tree}
\end{figure}

\myskip
\mypar{History trees.} \emph{History trees} were introduced in~\cite{DV22} as a tool of investigation for anonymous dynamic networks; an example is found in \cref{fig:ht1}. A history tree is a representation of a dynamic network given some inputs to the processes. It is an infinite graph whose nodes are partitioned into \emph{levels} $L_t$, with $t\geq -1$; each node in $L_t$ represents a class of processes that are \emph{indistinguishable} at the end of round~$t$ (with the exception of $L_{-1}$, which contains a single node $r$ representing all processes). The definition of distinguishability is inductive: At the end of round~$0$, two processes are distinguishable if and only if they have different inputs. At the end of round~$t\geq 1$, two processes are distinguishable if and only if they were already distinguishable at round~$t-1$ or if they have received different multisets of messages at round~$t$. (We refer to ``multisets'' of messages, as opposed to sets, because multiple copies of identical messages may be received; each message has a \emph{multiplicity}.)

Each node in level $L_0$ has a label indicating the input of the processes it represents. There are also two types of edges connecting nodes in adjacent levels. The \emph{black edges} induce an infinite tree spanning all nodes, rooted at node $r\in L_{-1}$. The presence of a black edge $\{v, v'\}$, with $v\in L_{t}$ and $v'\in L_{t+1}$, indicates that the \emph{child node} $v'$ represents a subset of the processes represented by the \emph{parent node} $v$. The \emph{red multi-edges} represent communications between processes. The presence of a red edge $\{v, v'\}$ with multiplicity $m$, with $v\in L_{t}$ and $v'\in L_{t+1}$, indicates that, at round~$t+1$, each process represented by $v'$ receives $m$ (identical) messages from processes represented by $v$.

The \emph{view} of a process $p$ at round~$t\geq0$ is the subgraph of the history tree that is spanned by all the shortest paths (using black and red edges indifferently) from the root $r$ to the node in $L_t$ representing $p$ (\cref{fig:ht1} shows an example of a view).

\myskip
\mypar{Applications of history trees.} As proved in~\cite{DV22}, in a connected network with a unique leader, the view of any process at round $3n$ contains enough information to determine the multiset of the initial inputs (i.e., how many processes have each input value). Thus, once a process is able to locally construct a view spanning $3n$ levels of a history tree, it can immediately do any computation on the inputs, and in particular determine $n$ and solve the Generalized Counting problem.

History trees were adopted in~\cite{DV22,DVdisc}, where it is shown how processes can construct their views of the history tree in real time. For the algorithm to work, each process is required to repeatedly send its current view to all its neighbors at every round, merging it with all the views it receives from them.

This approach is not feasible in the congested network model, because a view at round $t$ has size $\Theta(tn^2\log n)$ in the worst case, since there may be $\Theta(n^2)$ red edges in each level. In the following, we will develop a strategy whereby processes can construct a history tree one red edge at a time. The core idea is that the nodes of the history tree are assigned unique IDs, and therefore a single red edge can be encoded in only $O(\log n)$ bits as a pair of IDs and a multiplicity.

\section{Basic Counting Algorithm}\label{s:3}

In this section we describe our deterministic Counting algorithm for congested anonymous dynamic networks in its most basic version. The algorithm assumes the network to be connected ($T=1$), to have a unique leader, and execution terminates in $O(n^3\log n)$ rounds with the leader reporting the total number of processes in the system, $n$. The complete pseudocode is found in \cref{a:1}.

In \cref{s:5}, we will show how to optimize this basic algorithm, making it terminate in $O(n^3)$ rounds. We will also extend the algorithm in several directions, for instance by making all processes (as opposed to the leader only) simultaneously output $n$ and terminate. Furthermore, we will show how to not only count the number of processes, but also compute arbitrary (multiset-based, cf.~\cite{DV22}) functions, assuming that input values are assigned to the processes. Finally, we will extend the algorithm to leaderless networks and $T$-union-connected networks.

\subsection{Algorithm Outline}\label{s:3.1}

The only input given to each process is whether or not it is the leader; this information can be retrieved by calling the primitive \texttt{Input}, as in \al{1}{15}. Each process also has some private memory which is used to permanently store information in the form of internal variables (\cref{l:1}).

\myskip
\mypar{Virtual history tree (VHT).} The overall goal of the algorithm is for the processes to implicitly agree on the first $O(n)$ levels of a particular history tree, called \emph{virtual history tree (VHT)}, which corresponds to a dynamic network $\mathcal N$ of $n$ processes. Once the construction of each new level of the VHT is complete (refer to \cref{s:3.5}), the leader locally runs the Counting algorithm from~\cite{DV22} on the VHT (\als{2}{29--31}). If this algorithm successfully returns a number (as opposed to ``Unknown''), the leader outputs it; otherwise, the construction of a new level of the VHT is initiated.

\myskip
\mypar{Virtual network ($\mathcal N$).} The dynamic network $\mathcal N=(N_1, N_2, \dots)$ represented by the VHT is in fact a \emph{virtual network}, in the sense that none of the multigraphs $N_t$ necessarily coincides with any multigraph of links actually occurring in the real communication network $\mathcal G=(G_1, G_2, \dots)$. However, each $N_t$ is obtained by carefully adding and removing links from some $G_{i_t}$ (see \cref{fig:ht2}). This manipulation has the purpose of reducing the size of the resulting VHT by a factor of $n$ (see \cref{s:3.4,l:vhtsize}).

\myskip
\mypar{Temporary IDs.} To cope with the fact that processes are anonymous and information can only be sent in small chunks of size $O(\log n)$, each process has a \emph{temporary ID} stored in a local variable called \texttt{MyID}. Each node $v$ in the VHT also has an ID, indicating that $v$ represents all processes having that ID. Thus, a \emph{red-edge triplet} of the form $(\texttt{ID1}, \texttt{ID2}, \texttt{Mult})$ can be used to unambiguously represent a red edge of multiplicity \texttt{Mult} between the nodes of the VHT whose IDs are \texttt{ID1} and \texttt{ID2}. Since a red-edge triplet has size $O(\log n)$, it can be included in a single message. Note that the variable \texttt{MyID} of each process may be modified over time as the VHT acquires more nodes.

\myskip
\mypar{Broadcast phases.} The construction of the VHT is carried out level by level, and is done through several \emph{broadcast phases}, which are indirectly coordinated by the leader (see \cref{s:3.3}). At first, each process knows the red edges incident to its corresponding node of the VHT. Then, ideally every two broadcast phases, the whole network learns a new red edge of the VHT. The broadcast phases continue until all processes know all red edges in the level (see \cref{s:3.6}).

\myskip
\mypar{Estimating the diameter.} In order to guarantee the success of a broadcast phase, all processes must keep sending each other information for a certain number of rounds, which depends on the \emph{dynamic diameter} of the network, and is $n-1$ in the worst case~\cite{KLO10}. Since the processes do not initially possess any information at all, they can only make estimations on the dynamic diameter. The current estimate is stored by each process in the variable \texttt{DiamEstimate}. Its value is initially $1$, and it is doubled every time the processes detect an error in a broadcast.

\myskip
\mypar{Error phases.} Detecting broadcasting errors and consistently reacting to them is by no means a trivial task, and is discussed in \cref{s:3.7}. Our broadcasting technique ensures that, if some red-edge triplet fails to be broadcast to the entire network and does not become part of the local VHT of all processes, at least one process becomes aware of this fact. Such a process enters an \emph{error phase}, sending a high-priority message at every round containing the level number at which the error occurred. Error messages supersede the regular ones and eventually reach the leader.

\myskip
\mypar{Reset phases.} When the leader finally receives an error message, it initiates a \emph{reset phase}, whose goal is to force the whole network to restore a previous state of the VHT and continue from there. This is achieved by broadcasting a high-priority reset message. Since the error must have occurred because \texttt{DiamEstimate} was too small, its value is doubled at the end of the reset phase.

Note that there is no obvious way for the leader to tell if any level of the VHT is actually missing some parts. Indeed, at any point in time, there may be processes in an error phase unbeknownst to the leader. One of the challenges of our method is to ensure that the leader will not terminate with an incorrect guess on $n$ due to the VHT being incomplete (see \cref{t:main}).

\subsection{Communication and Priority}\label{s:3.2}

\mypar{Counting rounds.} The processes have to implicitly synchronize with one another to start and finish each broadcast phase at the same time. Part of the synchronization is achieved by the function \texttt{SendAndReceive} (\als{3}{14--18}), which is called by each process at every communication round. This function simply sends a given message to all neighbors, collects all messages coming from the neighbors, and increments the internal variable \texttt{CurrentRound}. Since communications in the network are synchronous, all processes always agree on the value of \texttt{CurrentRound}.

\myskip
\mypar{Message types.} The processes use messages of various types to share information with one another. Each message has a \emph{label} describing its type, as well as at most three additional integer parameters. As it will turn out in the analysis of the algorithm, each parameter has size $O(\log n)$ bits (see \cref{c:msize}).\footnote{As already remarked, the algorithm spontaneously creates $O(\log n)$-sized messages without any a-priori knowledge on $n$.}

The message types and their parameters are as follows.

\begin{itemize}
\item \textbf{Null message.} Label: ``Null''. No parameters.
\item \textbf{Level-begin message.} Label: ``Begin''. Parameters: \texttt{ID}.
\item \textbf{Level-end message.} Label: ``End''. No parameters.
\item \textbf{Done message.} Label: ``Done''. Parameters: \texttt{ID}.
\item \textbf{Red-edge message.} Label: ``Edge''. Parameters: \texttt{ID1}, \texttt{ID2}, \texttt{Mult}.
\item \textbf{Error message.} Label: ``Error''. Parameters: \texttt{ErrorLevel}.
\item \textbf{Reset message.} Label: ``Reset''. Parameters: \texttt{ResetLevel}, \texttt{StartingRound}, \texttt{NewDiam}.
\end{itemize}

\myskip
\mypar{Priority.} Messages have \emph{priorities} that determine how they are handled during a broadcast. The priority of a message is defined in \als{3}{1--12}, and can be summarized as follows:
\begin{center}
Null $<$ Begin $<$ End $<$ Done $<$ Edge $<$ \dots $<$ Reset $k+1$ $<$ Error $k$ $<$ Reset $k$ $<$ \dots $<$ Error 1 $<$ Reset 1
\end{center}
That is, the message with lowest priority is the Null message, followed by all possible Level-begin messages, then the Level-end message, etc. The priority of a Level-begin message is independent of its parameter. For all other message types, however, the priority is also a function of the parameters.

As for Done messages, different \texttt{ID} parameters yield different priorities. Thus, priority induces a total ordering on the set of all possible Done messages; the precise ordering is irrelevant, as long as all processes implicitly agree on it. All Done messages have greater priority than Null, Level-begin and Level-end messages. The same goes for Red-edge messages: Different parameters yield different priorities, and all processes agree on the priority function. All Red-edge messages have greater priority than all Done messages, and lower priority than all Error and Reset messages.

An Error message (respectively, a Reset message) with a smaller \texttt{ErrorLevel} (respectively, \texttt{ResetLevel}) has a greater priority. Moreover, the priorities of Error and Reset messages are interleaved: The priority of an Error message with $\texttt{ErrorLevel}=k$ is strictly between the priority of a Reset message with $\texttt{ResetLevel}=k+1$ and the priority of a Reset message with $\texttt{ResetLevel}=k$.

\subsection{Broadcasting Data}\label{s:3.3}

During the execution of the algorithm, a non-leader process may have a particular piece of information that it wishes to send to the leader. Similarly, the leader may have some information that it wishes to share with all processes in the network. Both operations are performed via a \emph{broadcast} spanning several rounds.

The broadcast technique used to construct the VHT is implemented in \als{3}{20--38}. It is assumed that all processes participating in the broadcast are \emph{synchronized}, i.e., they start at the same round and continue broadcasting for the same number of rounds (which is equal to \texttt{DiamEstimate}, cf.~\als{3}{30--31}). Each process is also assumed to have the information it wishes to share packed in a message of the appropriate type (see \cref{s:3.6}), which is passed to the function \texttt{BroadcastPhase} as the argument \texttt{Message}.

At each broadcast round, each process sends its message to all its neighbors. Then it examines the messages received from the neighbors, as well as its own message, and keeps only the message with highest priority, discarding all others (function \texttt{BroadcastStep}). This is the message the process will send in the next round, and so on.\footnote{Note that our broadcasting strategy implements a token-forwarding mechanism, and the sequence of all broadcast phases is akin to a Token Dissemination algorithm. It is known that any such algorithm must have a worst-case running time of at least $\Omega(n^2/\log n)$ rounds~\cite{DPR13}.}

Ideally, if the broadcast is continued for a sufficiently large number of rounds, all processes participating in the broadcast will eventually obtain the message having the highest priority among the ones initially owned by the processes (note that this may be an Error or Reset message, as well).

\subsection{Defining the Virtual Network}\label{s:3.4}

\mypar{Definition.} Recall that the VHT is the history tree of the virtual network $\mathcal N=(N_1, N_2, N_3, \dots)$. We will now define $N_t$ by induction on $t$. That is, assuming that the multigraphs $N_1$, $N_2$, \dots, $N_{t-1}$ are already known, we will construct $N_t$ based on the multigraph $G_{i_t}$, which represents the real communication network $\mathcal G$ at a selected round $i_t$ (\cref{s:3.6} explains how $i_t$ is selected).

Since the first $t-1$ rounds of the virtual network are known, we can construct the first levels of the VHT up to level $L_{t-1}$. By definition of history tree, each node $v\in L_{t-1}$ represents a class $P_v$ of processes that are indistinguishable after the first $t-1$ ``virtual rounds'' modeled by the communication networks $N_1$, $N_2$, \dots, $N_{t-1}$.

Consider the simple undirected graph $H=(V, E)$, where $V=L_{t-1}$ and $\{u, v\}\in E$ if and only if $u\neq v$ and there is at least one link in $G_{i_t}$ between a process in $P_u$ and one in $P_v$. Recall that we are assuming that $G_{i_t}$ is connected, and therefore $H$ is connected, as well (disconnected networks will be discussed in \cref{s:5}). Let $S=(V, E')$ be an arbitrary spanning tree of $H$.

Now, $N_t$ is the network having the following links (refer to \cref{fig:ht2}):
\begin{itemize}
\item For every edge $\{u,v\}\in E'$ of $S$, the multigraph $N_t$ contains all the links in $G_{i_t}$ having an endpoint in $P_u$ and an endpoint in $P_v$ (with the respective multiplicities).
\item For every $v\in V$, the multigraph $N_t$ contains a cycle $C_v$ spanning all the processes in $P_v$. In the special case where $P_v$ contains exactly two processes $p_1$ and $p_2$, $C_v$ is a double link between $p_1$ and $p_2$. If $P_v$ contains a single process $p$, $C_v$ is a double self-loop on $p$.
\end{itemize}
Note that, in every case, $C_v$ induces a 2-regular multigraph on $P_v$. The purpose of these (possibly degenerate) cycles $C_v$ is to ensure that $N_t$ is connected. On the other hand, the purpose of using a spanning tree $S$, as opposed to the full graph $H$, is to reduce the size of the VHT (see \cref{l:vhtsize}).

\myskip
\mypar{Implementation.} In the actual algorithm, $N_t$ is defined only implicitly in a distributed manner (because the ultimate goal is merely the construction of the VHT). The implicit construction of $N_t$ starts at round $i_t$, where each process (not in an error phase) invokes the function \texttt{SetUpNewLevel} and sends a Begin message containing its own ID to each neighbor (\als{4}{2--5}). As a result, each process learns the IDs of all its neighbors in $G_{i_t}$, as well as their multiplicities, and stores this information as a list of pairs of the form $(\texttt{ID}, \texttt{Mult})$ in the internal variable \texttt{ObsList} (\als{4}{6--12}). It should be noted that a process discards all Begin messages from processes with the same ID as its own. These are replaced by the single pair $(\texttt{MyID}, 2)$ (\al{4}{13}), which accounts for the two edges of the cycle $C_v$ incident to the process in $N_t$.

The choice of links to be included in $N_t$ (which directly reflects on the red edges included in the VHT) is guided by the construction and maintenance of an \emph{auxiliary graph} stored in each process' internal variable \texttt{LevelGraph}. The auxiliary graph is a graph on $V=L_{t-1}$ and starts with no edges at all (\als{4}{15--18}); it gradually acquires more edges until it becomes the spanning tree $S$ (as defined above). This is carried out as part of the function \texttt{UpdateTempVHT}, which is invoked every time a red-edge triplet $(\texttt{ID1}, \texttt{ID2}, \texttt{Mult})$ is selected to become a new red edge of the VHT at the end of a broadcast phase. Among other operations (described in \cref{s:3.5}), this function adds an edge to the auxiliary graph which connects the two nodes corresponding to \texttt{ID1} and \texttt{ID2} (\als{5}{29--32}). Then the function \texttt{PreventCyclesInLevelGraph} is called, which deletes all the pairs in \texttt{ObsList} whose selection would cause the creation of a cycle in the auxiliary graph (\als{5}{7--15}). This guarantees that, eventually, \texttt{LevelGraph} will be a tree (representing $S$).

\begin{figure}
\centering
\includegraphics[width=\linewidth]{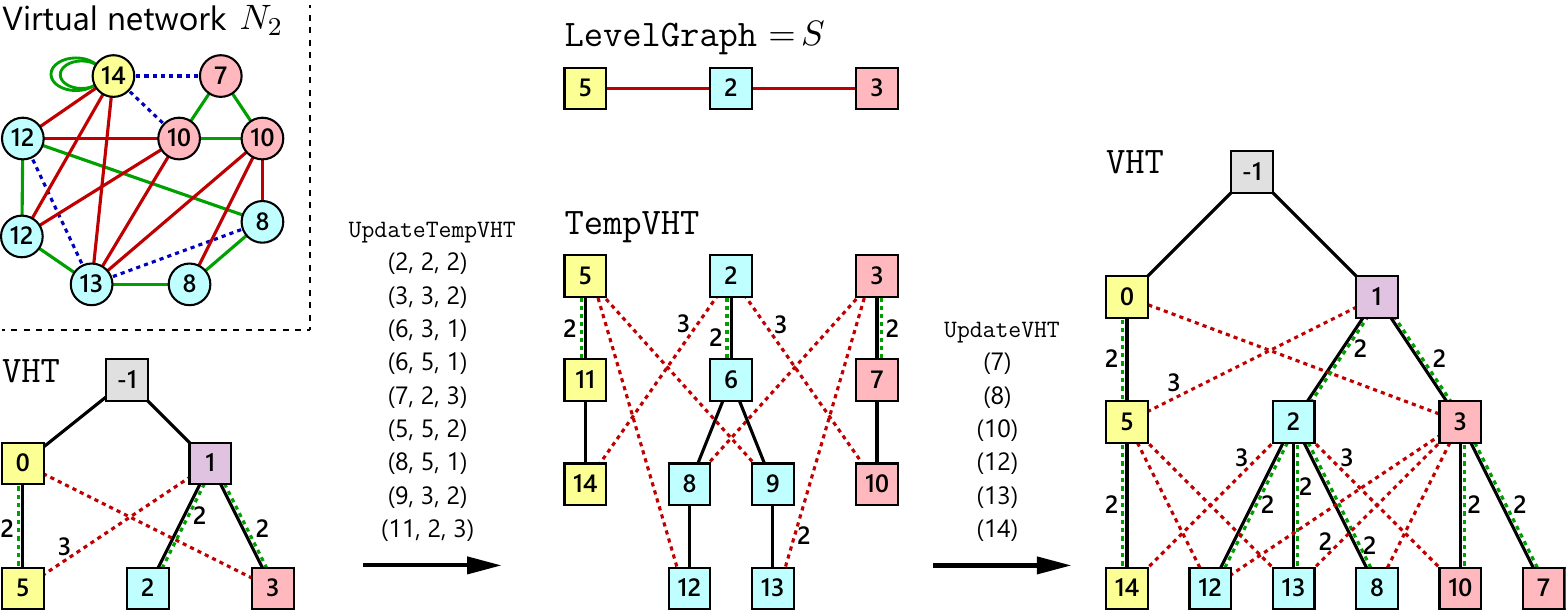}
\caption{Construction of the virtual network $N_2$ and level $L_2$ of the virtual history tree (VHT). The real network $G_{i_2}$ consists of the $n=9$ processes in the upper-left picture, connected by the red and blue links (not the green ones). Before the construction starts, the IDs in the network are $5$, $2$, and $3$. Same-colored processes have equal IDs when the construction of $L_2$ starts; the labels indicate their IDs when the construction finishes. Accordingly, level $L_1$ of the VHT has three nodes with IDs $5$, $2$, $3$. The graph $H$ is a triangle on these three nodes, while its spanning tree $S$ is the same as \texttt{LevelGraph}. Therefore, to construct $N_2$ we remove the blue edges from $G_{i_2}$ and keep the red ones (see \cref{s:3.4}). For example, after the triplets $(6,3,1)$ and $(6,5,1)$ are accepted, any elements of \texttt{ObsList} corresponding to the two blue edges incident to the yellow process are deleted and are never broadcast. We also have to add the green edges, which represent the cycles $C_v$. Note that failing to add them would result in a disconnected network, because the process with ID~$7$ would be isolated. For clarity, the edges of the (temporary) VHT are colored red or green to match the edges of the virtual network that they represent (although technically they are all red edges).}
\label{fig:ht2}
\Description{Construction of the VHT}
\end{figure}

\subsection{Constructing the Virtual History Tree (VHT)}\label{s:3.5}

\mypar{Initialization.} The VHT is initialized by the function \texttt{InitializeVariables} in \cref{l:1}. At first, the VHT only contains level $L_{-1}$ (a single root node whose ID is $-1$) and level $L_0$ (two nodes with IDs $0$ and $1$, representing the leader and the non-leader processes, respectively). Accordingly, the leader initializes its own \texttt{MyID} variable to $0$ and the non-leaders to $1$.

\myskip
\mypar{Temporary VHT.} Then, for all $t\geq 1$, the level $L_t$ of the VHT is constructed based on the virtual network $N_t$ as in \cref{fig:ht2}. When the construction begins, all processes (that are not in an error phase) acquire new pairs of the form $(\texttt{ID}, \texttt{Mult})$ at the same round $i_t$ and store them in \texttt{ObsList}, as explained in \cref{s:3.4}. The construction of a new level is not carried out directly on the VHT, but in a separate \emph{temporary VHT} stored in the internal variable \texttt{TempVHT}. Initially, this is just a copy of level $L_{t-1}$ of the VHT (\als{4}{14--17}), i.e., a set of nodes, each with a distinct ID, and no edges.

\myskip
\mypar{Adding red edges.} In order to determine how \texttt{TempVHT} should be updated, several broadcast phases are performed, allowing the processes to share their red-edge triplets with one another. Note that a process can transform each element of \texttt{ObsList} of the form $(\texttt{ID}, \texttt{Mult})$ into a red-edge triplet by simply adding its own ID as the first element (\als{4}{30--34}). Every time a red-edge triplet $(\texttt{ID1}, \texttt{ID2}, \texttt{Mult})$ has been broadcast by a process to the entire network and has been ``accepted'' (see \cref{s:3.6} for details), the function \texttt{UpdateTempVHT} is called (\als{2}{24--25}). The result is that the node $v$ of \texttt{TempVHT} whose ID is \texttt{ID1} gets a new child $v'$ with a new unique ID (\als{5}{21--24}). Pictorially, a black edge is created connecting $v'$ to $v$. Also, a red edge with multiplicity \texttt{Mult} is added to \texttt{TempVHT}, connecting the new node with the node whose ID is \texttt{ID2} (\al{5}{25}).

\myskip
\mypar{Updating IDs.} By definition, the red-edge triplet $(\texttt{ID1}, \texttt{ID2}, \texttt{Mult})$ indicates that \emph{some} processes whose ID is \texttt{ID1} have received exactly \texttt{Mult} messages from processes whose ID is \texttt{ID2}. These processes, which were previously represented by the node $v$, are now represented by $v'$. Therefore, every process whose ID is \texttt{ID1} that has the pair $(\texttt{ID2}, \texttt{Mult})$ in its local \texttt{ObsList} removes it from the list and modifies its own ID from \texttt{ID1} to the ID of $v'$ (\als{5}{26--28}).\footnote{To clarify, the nodes of the VHT have unique IDs. Each node represents a class of "indistinguishable" processes, all of which have the same ID as the node itself. When processes disambiguate in the virtual network, they obtain different IDs. However, it is not necessarily true that all processes will have distinct IDs eventually. For example, if the network is the (static) complete graph, all non-leader processes will always have the same ID (which is incremented at every virtual round).}

\myskip
\mypar{Updating the VHT.} When a process has no more red-edge triplets to share (that is, its \texttt{ObsList} is empty), it broadcasts a Done message containing its current ID (\als{4}{27--28}). Recall that Done messages have lower priority than Red-edge messages (see \cref{s:3.2}). When the final result of a broadcast is a Done message containing a certain ID, all processes assume that \emph{some} processes with that ID have sent all their red-edge triplets. Therefore, the node of the temporary VHT with that ID is ready to be added to the VHT.

To this end, the function \texttt{UpdateVHT} is called with the ID contained in the Done message (\als{2}{26--27}). This function creates a new node $v$ in level $L_t$ of \texttt{VHT} corresponding to the node $v'$ of \texttt{TempVHT} whose ID is the one passed to the function (refer to \cref{fig:ht2}). The node $v$ gets the same ID as $v'$ and becomes a child of the node $u\in L_{t-1}$ having the same ID as the root $v''$ of the tree containing $v'$ (\als{5}{36--41}). Then, $v$ takes all the red edges found along the path from $v'$ to $v''$ (\als{5}{42--48}).

\myskip
\mypar{Finalizing the level.} When the \texttt{ObsList} of a process is empty and the VHT already contains a node with its ID, the process broadcasts an Level-end message, which has lower priority than Done and Red-edge messages (\als{4}{24--25}). When the result of a broadcast is an Level-end message, the construction of the level is finished.

\subsection{Main Loop}\label{s:3.6}

The entry point of the algorithm is the function \texttt{Main} in \cref{l:2}. After initializing the internal variables as already explained, the function goes through a loop that constructs the VHT level by level. At several points in this loop there may be errors that cause some of the levels to be undone and execution to resume from the beginning of the loop; we will discuss errors in \cref{s:3.7}. Next, we will describe an ideal error-free execution.

\myskip
\mypar{Level initialization.} With each iteration of the main loop, a new level of the VHT is constructed. At the beginning, the function \texttt{SetUpNewLevel} is executed by all processes: This marks a selected round $G_{i_t}$ as defined in \cref{s:3.4} (the non-trivial fact that all processes always call this function at the same round will become apparent from our analysis of the algorithm, in \cref{s:4}).

\myskip
\mypar{VHT broadcast.} Then an inner loop is entered; the purpose of each iteration is for all processes to learn new information, which causes an update of the temporary VHT or the VHT itself.

At first, each process calls the function \texttt{MakeVHTMessage}, which picks an element from \texttt{ObsList}, converts it into a red-edge triplet by adding its own ID to it, and wraps it in a Red-edge message. If \texttt{ObsList} is empty, a Done message containing the process' ID is generated instead (\als{4}{21--35}).

The resulting message is used in a first broadcast phase which, after a number of rounds equal to \texttt{DiamEstimate}, returns the highest-priority message circulating in the network. This message is stored in the variable \texttt{VHTMessage} (\al{2}{12}).

\myskip
\mypar{Acknowledgment broadcast.} Note that, in the presence of a faulty broadcast, different processes may end up having different versions of \texttt{VHTMessage}. To ensure that all processes (that are not in an error phase) update their local copies of the (temporary) VHT in a consistent way, an ``acknowledgment'' broadcast phase is performed. Its purpose is for the leader to inform all other processes of the \emph{accepted message}; by definition, the accepted message is the leader's version of \texttt{VHTMessage}.

In the acknowledgment phase, all non-leader processes broadcast a low-priority Null message, while the leader broadcasts the accepted message. The message resulting from the broadcast is then stored in the variable \texttt{AckMessage} (\als{2}{14--19}).

\myskip
\mypar{Updating the level.} Now, each process compares the contents of the two messages \texttt{VHTMessage} and \texttt{AckMessage}. If they are the same (hence not Null), then this is indeed the accepted message coming from the leader, and the data therein is used to update the temporary VHT or the VHT. Specifically, if \texttt{AckMessage} is a Red-edge message, the red-edge triplet therein is used to update the temporary VHT (\als{2}{24--25}). If it is a Done message, the VHT is updated instead (\als{2}{26--27}).

\myskip
\mypar{Finalizing the level.} When a process is done broadcasting red-edge triplets and the VHT already contains a node representing it, the process broadcasts a low-priority Level-end message. As soon as the accepted message is the Level-end message, the level is considered complete and the inner loop is exited (\al{2}{28}).

\myskip
\mypar{Counting processes.} Now the leader extracts its own view from the VHT and locally runs the Counting algorithm from~\cite{DV22} on it (function \texttt{CountFromView}). By ``extracts its own view'' we mean that it makes a copy of the VHT and deletes all nodes that are not on a shortest path from the root to the deepest leader's node (the correctness of this operation is proved in \cref{c:vhtcorr}).

If the value returned by \texttt{CountFromView} is a number, this is taken as the correct number $n$ of processes in the virtual network $\mathcal N$ (and hence in the real network $\mathcal G$) and becomes the leader's output (\als{2}{29--31}). Otherwise, a new iteration of the main loop starts, a new level of the VHT is constructed, and so on.

\subsection{Handling Errors and Resets}\label{s:3.7}

\mypar{Detecting errors.} As mentioned, a broadcast phase lasts a number of rounds indicated by the variable \texttt{DiamEstimate}. If this number happens to be smaller than the dynamic diameter of the network, the broadcast may be unsuccessful, in the sense that not all processes may end up agreeing on the same highest-priority message. Fortunately, the protocol described in \cref{s:3.6} makes error detection very simple: At the end of every acknowledgment broadcast, if the contents of a process' variables \texttt{VHTMessage} and \texttt{AckMessage} are different, the broadcast was unsuccessful. Any process that detects this event enters an error phase (\als{2}{21--22}).

\myskip
\mypar{Error phase.} When a process enters an error phase, it runs the function \texttt{BroadcastError} (\als{6}{21--27}); as it turns out, only non-leader processes ever execute this function. During an error phase, a process continually broadcasts an Error message containing the index of the level of the VHT where the error was detected. This message is replaced by any other message of higher priority received by the process during the error phase (see \cref{s:3.2}). As soon as the process receives a Reset message for a level of smaller or equal depth than the one indicated by the current Error message (i.e., a Reset message of higher priority), it ends the error phase and enters a reset phase.

Another situation where a non-leader process enters an error phase is when it receives an Error message from some other process. In this case, it calls the function \texttt{HandleError} and starts broadcasting a new Error message containing the smaller between the index of the current level and the one contained in the received Error message (\als{6}{10--11}). This causes higher-priority error messages to propagate through the network and eventually reach the leader.

\myskip
\mypar{Reset phase.} This phase is initiated by the leader at the end of a broadcast phase (or after sending a Begin message) in case it received an Error message. The leader first waits for $2\cdot \texttt{DiamEstimate} +1$ rounds, sending only Null messages (\als{6}{12--16}). This is to ensure that all non-leader processes finish any broadcast phase and enter an error phase. In turn, this prevents any possible conflicts between different broadcast phases before and after a reset (see \cref{s:4}).

Then the leader creates a Reset message containing the index of the VHT level where the error occurred, the current round number, and the new estimate on the dynamic diameter, i.e., twice the current one, since the error occurred because the estimate was too small (\als{6}{1--7}).

Now the leader calls the function \texttt{BroadcastReset}, where it broadcasts the Reset message; any process that receives this message starts broadcasting it as well (provided that it has higher priority than the Error message it is currently broadcasting, as usual). The broadcast continues for a total number of rounds equal to the new dynamic diameter estimate. All processes that receive the Reset message are able to synchronize with one another and finish the reset phase at the same round, thanks to the information contained in the message itself (\als{6}{30--33}).

At the end of the reset phase, the reset is actually performed: Every process that got the Reset message deletes the most recent levels of the VHT up to the level where the error occurred (this information is contained in the Reset message) and reverts its ID to the one it had at the beginning of the construction of that level. The variable \texttt{DiamEstimate} is also updated with the new estimate (\als{6}{34--41}). At this point, the network is ready to resume construction of the VHT.

\section{Proof of Correctness}\label{s:4}

\mypar{Process phases.} In order to prove the correctness of the Counting algorithm in \cref{s:3}, it is convenient to formally define some terminology. We say that a process, at a certain round, is in a \emph{broadcast phase}, in an \emph{error phase}, or in a \emph{reset phase} if the process is executing, respectively, the function \texttt{BroadcastPhase}, \texttt{BroadcastError}, or \texttt{BroadcastReset}. More specifically, if the function \texttt{BroadcastPhase} was invoked from \al{2}{12}, the process is in an \emph{VHT broadcast phase}; if it was invoked from \al{2}{15 or~19}, the process is in an \emph{acknowledgment broadcast phase}. A process is in a \emph{begin round} if it runs the function \texttt{SetUpNewLevel} at that round. When the leader is sending Null messages in the loop at \als{6}{15--16}, it is in a \emph{wait phase}. We refer to processes in an error phase as \emph{error processes}, and to the others as \emph{non-error processes}.

\myskip
\mypar{Global agreement and phase separation.} We will prove that our broadcasting protocols guarantee that phases occur ``consistently'' among different processes. We say that two processes \emph{agree} on a variable at a certain round if their local instances of that variable are equal.

\begin{lemma}\label{l:diam}
At any round, all non-error processes agree on \texttt{DiamEstimate}.
\end{lemma}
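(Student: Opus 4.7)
The plan is to argue by induction on the round number $t$. For the base case ($t = 0$), every process has just executed \texttt{InitializeVariables}, which sets \texttt{DiamEstimate} $= 1$ uniformly, so agreement is immediate. For the inductive step, the first observation is that \texttt{DiamEstimate} is written at exactly one place in the pseudocode, namely at the end of \texttt{BroadcastReset} (the line where the process stores the \texttt{NewDiam} field of the Reset message it has just processed). Hence any change to the variable at round $t$ occurs only to processes exiting \texttt{BroadcastReset} at that round, and it suffices to show that the values they write coincide and that no non-error process is silently left with a stale value.

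Next I would show that any two processes exiting \texttt{BroadcastReset} at the same round carry Reset messages with identical \texttt{NewDiam}. Reset messages originate only from the leader in \texttt{InitiateReset}, which sets \texttt{NewDiam} to $2 \cdot \texttt{DiamEstimate}$ and \texttt{StartingRound} to the current round counter. Because the leader's control flow is sequential and never overlaps two calls to \texttt{BroadcastReset}, all Reset messages circulating in the network at any given round descend from a single such invocation and share identical fields. Since \texttt{BroadcastReset} terminates at the self-synchronizing round \texttt{StartingRound} $+$ \texttt{NewDiam} (\als{6}{30--33}), every process completing this reset does so at the same global round and writes the same \texttt{NewDiam}, preserving agreement across all of them.

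The main obstacle is the possibility that some non-error process $p$ fails to receive the Reset message during the reset broadcast and therefore does not update \texttt{DiamEstimate} when others do. To rule this out I would exploit the $2 \cdot \texttt{DiamEstimate} + 1$ wait phase that the leader runs before \texttt{BroadcastReset} (\als{6}{15--16}): by the inductive hypothesis all non-error processes share the same \texttt{DiamEstimate} $= d$ at the start of the wait, so any in-flight broadcast phase (of length $d$) terminates strictly within the wait window. Combined with the priority hierarchy of \cref{s:3.2}—in particular the fact that a Reset message for level $k$ dominates any Error message for level $\geq k$—one checks that at round \texttt{StartingRound} $+$ \texttt{NewDiam} every process is either in \texttt{BroadcastReset} (hence updates consistently) or still broadcasting an Error message for some level, in which case it is by definition an error process and excluded from the statement. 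I expect this phase-separation invariant to be the technical crux, and I anticipate proving it in concert with the subsequent lemmas of \cref{s:4} that formalize the synchronization of begin rounds and broadcast phases between the leader and the non-error processes.
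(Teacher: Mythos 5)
Your proposal is correct and follows essentially the same route as the paper's proof: both reduce the claim to the facts that \texttt{DiamEstimate} is written only at the end of \texttt{BroadcastReset}, that the leader's $2\cdot\texttt{DiamEstimate}+1$ wait forces every process not reached by the reset into an error phase (hence outside the statement), and that the \texttt{StartingRound} and \texttt{NewDiam} fields synchronize all participants to finish the reset at the same round with the same new value. The only difference is cosmetic (induction on the round number rather than on the number of completed reset phases), so no further comparison is needed.
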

\begin{proof}
We can prove it by induction on the number of reset phases that the leader has completed. All processes initialize the variable to $1$, and modify it only at the end of a reset phase. Assume that all non-error processes agree on $\texttt{DiamEstimate}=d$ at some round. Before the leader initiates a reset phase, it waits for $2d+1$ rounds, at the end of which all processes must be in an error phase. Indeed, this is the maximum time they may spend in a begin round and two broadcast phases before realizing that no acknowledgment came from the leader. Our reset protocol guarantees that all processes involved in the reset phase end it at the same round, updating their \texttt{DiamEstimate} to the same value. All other processes are still in an error phase, and the induction goes through.
\end{proof}

\begin{lemma}\label{l:phases}
If the leader is in a begin round (respectively, a VHT broadcast phase, an acknowledgment broadcast phase, a reset phase), then so are all non-error processes.
\end{lemma}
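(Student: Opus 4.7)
The plan is to prove the lemma by strong induction on the round number $t$. For the base case, at round $1$ every process has only performed local initialization via \texttt{InitializeVariables} and then invokes \texttt{SetUpNewLevel}, so the leader and every non-error process are in the same begin round.

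For the inductive step, I would case-split on the leader's phase at round $t+1$. The begin-round, VHT broadcast, and acknowledgment broadcast cases rest on three observations: (i) by the inductive hypothesis, the leader and all non-error processes entered the current phase at the same round; (ii) a begin round lasts exactly one round and each broadcast phase lasts exactly \texttt{DiamEstimate} rounds, a value on which all non-error processes agree by \cref{l:diam}; and (iii) no non-error process can silently transition in the middle of such a phase, since error phases are entered only at the end of an acknowledgment broadcast (or upon receiving an Error message, which moves the process out of the non-error set), and reset phases are entered only from error phases. The delicate point is the transition at the end of an acknowledgment broadcast: letting $X$ be the leader's accepted message, every non-leader non-error process sends only \texttt{Null} during that phase, while every error process sends only Error or Reset messages. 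Consequently, the only ``normal'' message (Red-edge, Done, or Level-end) that can circulate is $X$, so a non-leader non-error process can have $\texttt{VHTMessage}=\texttt{AckMessage}$ only if both equal $X$, in which case it performs the same local update as the leader; all other non-leader processes detect a mismatch and become error processes, and the survivors remain in sync with the leader.

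For the reset-phase case I would exploit the wait phase that precedes it. By the inductive hypothesis, when the leader enters the wait, it has just finished an acknowledgment broadcast together with every non-error process; any such process that continued into a ``normal'' phase can execute at most a begin round plus one full VHT and acknowledgment broadcast within the $2\cdot\texttt{DiamEstimate}+1$ rounds of the wait. Since the leader contributes only \texttt{Null} messages throughout the wait, that acknowledgment broadcast must end with a mismatch and the process becomes an error process. Thus at the start of the leader's reset phase every non-leader process is in an error phase; from then on, as the Reset message propagates, any process that receives it enters a reset phase synchronized with the leader via the \texttt{StartingRound} and \texttt{NewDiam} parameters, and processes that never receive the Reset remain error processes. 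So at every round of the leader's reset phase the non-error processes are exactly the leader and the processes currently in the synchronized reset phase, all in the reset phase.

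I expect the main obstacle to be the consistency claim inside the acknowledgment broadcast analysis, namely ruling out a non-leader non-error process whose \texttt{VHTMessage} and \texttt{AckMessage} happen to equal each other but differ from the leader's $X$. Establishing this requires carefully enumerating the message types that may circulate during an acknowledgment broadcast and invoking the structural fact that only the leader ever sends Red-edge, Done, or Level-end messages during that phase; the rest of the argument is essentially bookkeeping on deterministic phase durations.
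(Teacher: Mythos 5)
Your proof is correct and follows essentially the same route as the paper's: an induction in which agreement on \texttt{DiamEstimate} (\cref{l:diam}) together with the fixed, deterministic duration of each phase keeps every non-error process in lockstep with the leader, while the $2\cdot\texttt{DiamEstimate}+1$-round wait forces any straggler into an error phase before the leader begins its reset (the paper folds this last part into the proof of \cref{l:diam}). Your extra analysis of the acknowledgment broadcast --- that only the leader's accepted message can circulate as a normal message, so a surviving process necessarily agrees with the leader --- is detail the paper defers to \cref{c:totalagree}, and it is consistent with the intended argument.
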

\begin{proof}
We have already proved the claim for reset phases in \cref{l:diam}. The other claims hold by induction, due to the fact that all non-error processes always agree on \texttt{DiamEstimate}, and are therefore able to implicitly synchronize all their phases with the leader. That is, since all non-error processes finish every reset phase at the same round, and since all broadcast phases have a duration of \texttt{DiamEstimate} rounds, they must execute all of these phases in unison with the leader.
\end{proof}

\begin{corollary}\label{c:totalagree}
At any round, all non-error processes agree on \texttt{CurrentLevel}, \texttt{VHT}, \texttt{TempVHT}, \texttt{LevelGraph}, and \texttt{NextFreshID}.
\end{corollary}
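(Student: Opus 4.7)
My plan is to prove the corollary by induction on the round number. The base case is immediate, since \texttt{InitializeVariables} in \cref{l:1} sets the five variables to deterministic initial values that are the same for every process (for instance \texttt{CurrentLevel}~$=0$, \texttt{NextFreshID}~$=2$, and the initial \texttt{VHT} spans only levels $L_{-1}$ and $L_0$). For the inductive step, I would invoke \cref{l:phases} to argue that all non-error processes execute the critical sections of the code in lockstep, and then show that each update is a deterministic function of the previously agreed-upon state together with a single piece of incoming data on which the processes must also agree.

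Next I would enumerate the few places in the pseudocode where the five variables are modified: (a) \texttt{SetUpNewLevel}, which bumps \texttt{CurrentLevel} and rebuilds \texttt{TempVHT} and \texttt{LevelGraph} from the shared \texttt{VHT}; (b) \texttt{UpdateTempVHT} and \texttt{UpdateVHT}, which consume the accepted message \texttt{AckMessage} at the end of each acknowledgment broadcast; and (c) the rollback at the end of \texttt{BroadcastReset}, which consumes the \texttt{ResetLevel} field of a Reset message. Cases (a) and (c) are essentially bookkeeping: in (a) the relevant input is derived directly from the inductively-agreed \texttt{VHT}, while in (c) the Reset message is the highest-priority message broadcast in the reset phase and is therefore received identically by every process that actually completes the reset and re-enters the non-error set.

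The crux is case (b), where I must show that all non-error processes hold the same \texttt{AckMessage}, namely the leader's. The argument I have in mind uses three observations: the leader is always a non-error process and its \texttt{AckMessage} is its own \texttt{VHTMessage}; every non-leader enters the acknowledgment phase with a Null message, which has strictly lower priority than any other message type by \cref{s:3.2}; and any non-error process must finish the phase with \texttt{AckMessage}~$=$~\texttt{VHTMessage}, while \texttt{VHTMessage} is never Null because the preceding VHT broadcast always yields a Red-edge, Done, or Level-end message via \texttt{MakeVHTMessage}. Consequently, the only possibilities for a non-leader's \texttt{AckMessage} are the leader's message (which preserves agreement), a Null message (impossible, since then \texttt{VHTMessage}~$\neq$~\texttt{AckMessage}), or an Error/Reset message (which triggers \texttt{HandleError} or the reset logic and therefore removes the process from the non-error set).

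The main obstacle will be making this last case analysis fully rigorous, especially verifying that the interleaved priority ordering of Error and Reset messages prevents any contaminating message from silently overwriting \texttt{AckMessage} on a process that nevertheless remains non-error. Once this is established, the deterministic update performed by \texttt{UpdateTempVHT} or \texttt{UpdateVHT} on the shared previous state with the shared accepted message yields identical results on every non-error process, and the induction closes.
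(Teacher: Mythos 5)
Your proposal is correct and follows essentially the same route as the paper's (much terser) proof: induction using \cref{l:phases} for lockstep phase execution, plus the observation that the acknowledgment protocol forces every non-error process to apply exactly the leader's accepted message, with any disagreeing process dropping into an error phase. The only nit is the base case detail that \texttt{InitializeVariables} sets \texttt{CurrentLevel} to $1$, not $0$; this does not affect the argument.
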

\begin{proof}
A straightforward inductive proof follows from \cref{l:phases} and the fact that all non-error processes update these variables at a begin round or at the end of a broadcast phase or reset phase. Indeed, the acknowledgment protocol guarantees that a Red-edge or a Done message is used to update the (temporary) VHT only if it is the one accepted by the leader; also, any process that does not update its (temporary) VHT at the same time as the leader becomes an error process.
\end{proof}

Thanks to \cref{l:phases}, we can now unambiguously reason about the phases of the entire network (minus the error processes), because these phases coincide with the phases of the leader. Similarly, we can refer to the variables mentioned in \cref{l:diam,c:totalagree} with no explicit reference to a (non-error) process, because they all coincide with the local variables of the leader.

\myskip
\mypar{Faulty broadcast phases.} Recall that, if no reset interrupts the construction of the VHT, two consecutive broadcast phases occur between each call to \texttt{UpdateTempVHT} or to \texttt{UpdateVHT}. In the VHT broadcast phase, each process broadcasts a message containing information about the VHT; in the acknowledgment broadcast phase, the leader broadcasts the accepted message (see \cref{s:3.6}). We say that the broadcast (to be understood as a pair of broadcast phases) is \emph{faulty} if the message of highest priority fails to reach the leader in the VHT broadcast phase or fails to be delivered to all processes in the acknowledgment broadcast phase.

\myskip
\mypar{Ideal VHT and effective VHT.} Suppose that, at some point during the execution of our algorithm, the $t$ rounds $N_1$, $N_2$, \dots, $N_t$ of the virtual network $\mathcal N$ have been determined from the selected networks $G_{i_1}$, $G_{i_2}$, \dots, $G_{i_t}$ as explained in \cref{s:3.4}. These $t$ rounds of the virtual network define the levels of the virtual history tree from level $L_{-1}$ up to level $L_t$. We refer to the tree composed of these levels as the \emph{ideal VHT}, because it is what the algorithm aims to construct.

Since some broadcasts may be faulty, each new level added to the variable \texttt{VHT} may be missing some parts compared to the ideal VHT. We refer to the actual content of the variable \texttt{VHT} as the \emph{effective VHT}. Note that, while the nodes of the effective VHT have IDs, the nodes of the ideal VHT do not, except in level $L_0$ (see \cref{fig:ht1}). Still, the effective VHT may be \emph{isomorphic} to a subgraph of the ideal VHT when we remove IDs from nodes, except for the IDs~$0$ and~$1$ (which are in level $L_0$).

\myskip
\mypar{Generalized views.} A \emph{generalized view} $\mathcal V$ of a history tree $\mathcal H$ is a subgraph of $\mathcal H$ such that, if $\mathcal V$ contains a node $v$ of $\mathcal H$, then $\mathcal V$ also contains the nodes and edges spanned by all shortest paths in $\mathcal H$ (using black and red edges indifferently) from the root of $\mathcal H$ to $v$. As the name implies, a view is a special case of a generalized view.

\myskip
\mypar{Construction of the VHT.} We will now prove that our VHT construction protocol is correct.

\begin{lemma}\label{l:vhtcorr}
At all times, the effective VHT is isomorphic to a generalized view of the ideal VHT.
\end{lemma}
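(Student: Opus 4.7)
The plan is to prove the lemma by induction on the sequence of modifications to the leader's variable \texttt{VHT}, which by \cref{c:totalagree} coincides with the variable \texttt{VHT} of every non-error process and can therefore be unambiguously referred to as \emph{the} effective VHT. The base case is trivial: after \texttt{InitializeVariables}, the effective VHT consists of the root $r$ (with ID $-1$) and two children in $L_0$ (with IDs $0$ and $1$) bearing the leader/non-leader input labels. This is literally the top of the ideal VHT, so the invariant holds.

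For the inductive step, the only code paths that touch \texttt{VHT} are \texttt{UpdateVHT} (invoked at \al{2}{27} after the leader accepts a Done message) and the rollback inside \texttt{BroadcastReset} at \als{6}{34--41}. Modifications of \texttt{TempVHT} alone (through \texttt{UpdateTempVHT}) leave \texttt{VHT} unchanged and are trivial. A reset deletes whole top levels of \texttt{VHT}; since a generalized view remains a generalized view after removing its top levels (the ancestors and red-edge endpoints of any surviving node are, by definition, at shallower levels and hence unaffected), this case is handled immediately by the inductive hypothesis.

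The main work is the \texttt{UpdateVHT} case. Here the new node $v$ placed in $L_t$ is a copy of a node $v'$ of \texttt{TempVHT} together with its parent pointer to $u \in L_{t-1}$ (sharing the ID of the root $v''$ of $v'$'s subtree) and with the red edges accumulated along the path from $v'$ to $v''$ (\als{5}{36--48}). The plan is to show, by a secondary induction on the sequence of \texttt{UpdateTempVHT} calls since the current begin round $i_t$, that every red edge along this path is a red edge of the ideal VHT. Each such red edge was inserted in response to an accepted triplet $(\texttt{ID1}, \texttt{ID2}, \texttt{Mult})$; because the leader always agrees with itself on \texttt{AckMessage}, this triplet was produced by some process $p$ with \texttt{MyID}$=\texttt{ID1}$ from a pair that entered its \texttt{ObsList} at round $i_t$, and the filtering of \texttt{PreventCyclesInLevelGraph} guarantees that it is consistent with the spanning tree $S$ implicitly defining $N_t$ in \cref{s:3.4}. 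Thus $v$, together with its parent $u$ and red-edge endpoints, corresponds to an honest node of the ideal VHT at level $L_t$; the nodes $u$ and the red-edge endpoints all lie in the effective VHT by the primary inductive hypothesis, so every shortest path from the root to $v$ in the ideal VHT is fully contained in the effective VHT after the update.

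The delicate point---and the main obstacle---is that faulty broadcasts could in principle cause a Done message to be accepted while some relevant Red-edge triplet is still held by a non-error process, which would naively leave $v$ with fewer red edges than the corresponding ideal node. The plan is to dispose of this possibility by showing that, under a faulty VHT broadcast, any such process' local \texttt{VHTMessage} must differ from the leader's \texttt{AckMessage}, forcing it into an error phase at \al{2}{22}; the isomorphism we maintain is then between the current effective VHT and the ideal VHT obtained from the virtual network $N_t$ derived only from the observations of non-error processes. This reformulation keeps the ideal VHT well defined throughout the execution and lets the isomorphism be built incrementally by matching IDs, with freshness of \texttt{NextFreshID} guaranteeing that the mapping stays injective across all updates and all resets.
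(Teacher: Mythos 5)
Your overall scaffolding (induction on modifications of \texttt{VHT}, trivial base case, the observation that a reset truncates a generalized view to a generalized view, the reduction of the interesting case to \texttt{UpdateVHT}) matches the paper's proof, and you correctly isolate the crux: an accepted Done message might create a node that is missing red edges of the corresponding ideal node. However, your resolution of that crux is a genuine gap. You propose to force any process still holding a relevant triplet into an error phase (which is true, and the paper uses this fact) and then to \emph{redefine} the ideal VHT as the history tree of a virtual network ``derived only from the observations of non-error processes.'' This changes the statement being proved: the ideal VHT must remain the fixed history tree of a virtual network on all $n$ processes, because \cref{c:vhtcorr} and \cref{t:main} feed the leader's view of it to \texttt{CountFromView}, which returns the number of processes of whatever network its input is the history tree of --- a ``non-error-only'' ideal VHT would make the algorithm output something other than $n$. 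Moreover, the set of error processes grows during a level and shrinks after every reset, so your redefined target object is a moving one and the induction over it is not well-founded.

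The redefinition is also unnecessary, and what it papers over is precisely the missing argument. The paper's key step is that the process $q$ whose Done message was accepted is itself a \emph{complete witness} for its ideal node $u'$: at the begin round $i_{t+1}$ none of $q$'s neighbors in $G_{i_{t+1}}$ with a different ID can be an error process (otherwise $q$ would have received a non-Begin message in \texttt{SetUpNewLevel} and become an error process itself, hence never sent an accepted Done), so $q$'s \texttt{ObsList} initially contains \emph{every} pair corresponding to a red edge of $u'$, including the automatic $(\texttt{MyID},2)$ for the cycle $C_w$. Since a pair leaves \texttt{ObsList} only when its triplet is accepted by the leader or when \texttt{PreventCyclesInLevelGraph} discards it as not belonging to $N_{t+1}$, the emptiness of $q$'s \texttt{ObsList} certifies that every red edge of $u'$ has already been accepted and installed in \texttt{TempVHT}, and hence is copied onto the new node $u$. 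This completeness is exactly what the generalized-view property demands, since every red edge incident to $u'$ lies on a shortest path from the root to $u'$. Without this witness argument (and without explicitly carrying the ID-consistency invariant --- that a non-error process and its effective-VHT node share an ID --- through the induction, as the paper does), your proof does not establish the lemma as stated.
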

\begin{proof}
The proof is by induction on the number of nodes added to the effective VHT via the function \texttt{UpdateVHT} in \al{2}{27}. By the inductive hypothesis, there is an isomorphism $\varphi$ that maps nodes of the effective VHT to equivalent nodes of a generalized view of the ideal VHT (disregarding IDs, except in level $L_0$).

As part of the inductive proof, we also include an additional statement: When level $L_t$ of the effective VHT is finalized, let $p$ be any process, and let $v'_p$ be the node of level $L_t$ of the ideal VHT representing $p$. If there is a node $v_p$ of the effective VHT such that $\varphi(v_p)=v'_p$, then $p$ and $v_p$ have the same ID. Otherwise, $p$ is an error process. Clearly, all of these claims hold for level $L_0$.

We will now show how to extend $\varphi$ to a new node $u$ created by a call to \texttt{UpdateVHT}, say in level $L_{t+1}$ of the effective VHT. Let $q$ be any process that sent a Done message with the ID passed to \texttt{UpdateVHT}, and let $u'$ be the node representing $q$ in level $L_{t+1}$ the ideal VHT.

By the inductive hypothesis, the IDs and multiplicities stored in the \texttt{ObsList} of $q$ during the begin round for level $L_{t+1}$, say round $i_{t+1}$, correctly encoded red edges incident to $u'$. There were possibly some extra elements in \texttt{ObsList}, but these were removed after some broadcast phases by the function \texttt{PreventCyclesInLevelGraph}. Hence, any red-edge triplet that was broadcast by $q$ had the correct meaning.

Note that none of the neighbors of $q$ in the virtual network $N_{t+1}$ with an ID different from the ID of $q$ was an error process at round $i_{t+1}$, or else $q$ would have become an error process as well, and would not have participated in any VHT broadcast. So, $q$ had \emph{all} the red-edge triplets corresponding to its neighbors with different IDs. $q$ also had the red-edge triplet corresponding to the cycle $C_w$ involving it ($w$ being the node of level $L_t$ of the effective VHT representing $q$), because the pair $(\texttt{MyID}, 2)$ is automatically inserted in \texttt{ObsList}.

We conclude that, if $q$ sent a Done message, it must have successfully sent the red-edge triplets corresponding to all the red edges incident to $u'$. All of these were accepted by the leader. Thus, all the corresponding red edges we added to the temporary VHT, and are now being added to the effective VHT, incident to the new node $u$. Also, $u$ must be a child of $w$, the node that represented $q$ in $L_t$. Therefore, $\varphi$ can be extended as $\varphi(u)=u'$, because $u$ and $u'$ have equivalent incident black and red edges.

By the inductive hypothesis, the effective VHT was isomorphic to a generalized view of the ideal VHT before the addition of $u$. Since the edges incident to $u$ in the effective VHT are equivalent to \emph{all} the edges incident to $u'$ in the ideal VHT, the updated effective VHT is still isomorphic to a generalized view.

It remains to prove that, when level $L_{t+1}$ of the effective VHT is finalized, any process $q'$ that is not represented by any node of the effective VHT must be an error process. That is, no node in level $L_{t+1}$ of the effective VHT has the same ID as $q'$. This implies that $q'$ never sent a Done message, or its Done message was never accepted by the leader. In any case, $q'$ must be an error process.
\end{proof}

\begin{corollary}\label{c:vhtcorr}
Extracting the leader's view from the effective VHT (\al{2}{30}) yields a graph isomorphic to the view of the leader in the ideal VHT.
\end{corollary}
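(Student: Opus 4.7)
The plan is to invoke \cref{l:vhtcorr} directly and then show that the two additional properties needed for the corollary, namely (i) that the leader is represented at the deepest level of the effective VHT, and (ii) that the shortest paths from the root to the leader's node in the effective VHT correspond under the isomorphism $\varphi$ exactly to the shortest paths from the root to the leader's node in the ideal VHT, both follow from the structure already established.

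First I would observe that the leader never enters an error phase: an inspection of the pseudocode shows that \texttt{BroadcastError} is only invoked by non-leader processes, and the leader is the one that issues accepted messages, so its own \texttt{VHTMessage} and \texttt{AckMessage} always coincide. Consequently, the ``additional statement'' carried inside the induction of \cref{l:vhtcorr} guarantees that, at every finalized level $L_t$ of the effective VHT, there is a node $v_L$ with the same ID as the leader and $\varphi(v_L) = v'_L$, where $v'_L$ is the leader's representative node in level $L_t$ of the ideal VHT. In particular, this holds for the deepest finalized level, which is the one used in \al{2}{30}.

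Next I would exploit the defining property of a generalized view. Because $\varphi$ is an isomorphism between the effective VHT and a generalized view $\mathcal V$ of the ideal VHT, and $v'_L \in \mathcal V$, the subgraph $\mathcal V$ contains every shortest path of the ideal VHT from its root to $v'_L$. This forces any shortest path from root to $v'_L$ in $\mathcal V$ to have the same length as a shortest path in the ideal VHT, and hence the two sets of paths coincide. Pulling back along $\varphi^{-1}$, the shortest paths from root to $v_L$ in the effective VHT are precisely the $\varphi^{-1}$-images of the shortest paths from root to $v'_L$ in the ideal VHT.

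Extracting the leader's view from the effective VHT, as defined in \cref{s:3.6}, takes exactly the subgraph spanned by those shortest paths. Restricting $\varphi$ to this subgraph therefore yields an isomorphism onto the subgraph of the ideal VHT spanned by the shortest paths from the root to $v'_L$, which is by definition the leader's view in the ideal VHT. The only subtle step, and the one I would write most carefully, is the equality between shortest paths in $\mathcal V$ and in the full ideal VHT; everything else is a bookkeeping consequence of \cref{l:vhtcorr} and the fact that the leader is never erroneous.
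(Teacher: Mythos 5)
Your proposal is correct and follows essentially the same route as the paper's proof: invoke \cref{l:vhtcorr}, note that the leader is always represented in the newly finalized level (the paper justifies this by the leader not having entered a reset phase, you by the leader never executing \texttt{BroadcastError} together with the auxiliary inductive statement), and then apply the definition of generalized view. Your extra care about shortest paths in the generalized view agreeing with shortest paths in the full ideal VHT is a valid and slightly more explicit treatment of a step the paper leaves implicit.
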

\begin{proof}
By \cref{l:vhtcorr}, the effective VHT is always isomorphic to a generalized view of the ideal VHT. The instruction in \al{2}{30} is executed by the leader after completing a new level of the effective VHT. Since the leader did not enter a reset phase, it must be represented in the effective VHT; that is, there is a node that represents the leader in the new level. By definition of generalized view, since the effective VHT contains the leader's node, it also contains all the nodes and edges corresponding to the view of the leader in the ideal VHT. Thus, extracting the leader's view from the effective VHT yields a graph isomorphic to the view of the leader in the ideal VHT.
\end{proof}

%\myskip
\mypar{Size of the VHT.} Recall that links are not included in the virtual network if they cause cycles to appear in \texttt{LevelGraph} (see \cref{s:3.4}). Although this strategy may not guarantee that every level of the VHT has $O(n)$ red edges, it does amortize the total number of red edges over several levels.

\begin{lemma}\label{l:vhtsize}
The total number of red edges in the first $O(n)$ levels of the ideal VHT, counted \emph{without their multiplicity}, is $O(n^2)$.
\end{lemma}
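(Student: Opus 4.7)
The plan is to establish a per-level bound on the number of red edges and then amortize across levels, using the spanning-tree structure of the virtual network $N_t$. Let $E_t$ denote the number of red edges (ignoring multiplicity) between $L_{t-1}$ and $L_t$ in the ideal VHT, $n_t = |L_t|$, and $r_t(v)$ the number of children in $L_t$ of a node $v \in L_{t-1}$, so that $\sum_v r_t(v) = n_t$. I would first argue that every red edge has a unique child endpoint $v' \in L_t$ and classify its possible parent endpoints $u \in L_{t-1}$: by the construction of $N_t$ in \cref{s:3.4}, processes in $P_{v'}$ receive messages only from $P_v$ via the cycle $C_v$ (where $v = \pi(v')$) and from $P_u$ via inter-class links for $u \in N_{S_t}(v)$. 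Hence the red-degree of $v'$ is at most $\deg_{S_t}(v) + 1$, giving
\[
E_t \;\leq\; \sum_{v \in L_{t-1}} r_t(v)\bigl(\deg_{S_t}(v)+1\bigr) \;=\; n_t \;+\; \sum_{v \in L_{t-1}} r_t(v)\deg_{S_t}(v).
\]

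Next, since each node of $L_{t-1}$ has at least one child in $L_t$, I would write $r_t(v) = 1 + \delta_v$ with $\delta_v \geq 0$ and $\sum_v \delta_v = n_t - n_{t-1} =: \Delta_t$. Using that $S_t$ is a spanning tree on $n_{t-1}$ vertices, so $\sum_v \deg_{S_t}(v) = 2(n_{t-1}-1)$, together with the trivial estimate $\deg_{S_t}(v) \leq n-1$, this yields
\[
E_t \;\leq\; n_t + 2(n_{t-1}-1) + (n-1)\,\Delta_t.
\]
Summing over the first $T = O(n)$ levels, $\sum_t n_t \leq Tn = O(n^2)$ and $\sum_t 2(n_{t-1}-1) \leq 2Tn = O(n^2)$, while by telescoping $\sum_t \Delta_t \leq n_T - n_0 \leq n$, so $(n-1)\sum_t \Delta_t = O(n^2)$. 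Therefore $\sum_{t=1}^T E_t = O(n^2)$, as claimed.

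The main obstacle I expect is that the per-level bound is genuinely quadratic: a single node can simultaneously have high $S_t$-degree and split into many children, producing $\Theta(n^2)$ red edges in one level, so no linear per-level estimate holds. What rescues the argument is amortization against the global refinement budget $\sum_t \Delta_t \leq n$, combined with the fact that the spanning-tree restriction keeps $\sum_v \deg_{S_t}(v)$ linear in $n$ rather than quadratic. Without the use of the spanning tree $S_t$ in place of the full graph $H$ in \cref{s:3.4}, the $(n-1)\Delta_t$ term would become a $\Theta(n^2)\Delta_t$ term and the amortization would fail.
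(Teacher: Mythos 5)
Your proof is correct and follows essentially the same route as the paper's: both charge each red edge to its child endpoint in the new level, bound the per-child red-degree by the spanning-tree degree of the parent plus one for the cycle $C_v$, and amortize using the two facts that the spanning tree keeps the total degree at $2(|L_{t-1}|-1)$ per level while the total number of extra children over all levels telescopes to at most $n$. The only difference is cosmetic bookkeeping: you split $r_t(v)=1+\delta_v$ and pay $(n-1)\Delta_t$ for the surplus, whereas the paper bounds $\sum_i c_{t,i}d_{t,i}\leq c^\ast_t\sum_i d_{t,i}$ with $c^\ast_t=\max_i c_{t,i}$ and uses $\sum_t(c^\ast_t-1)\leq n$.
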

\begin{proof}
By definition of history tree, each node of the VHT represents at least one process, and the nodes in each level represent a partition of the $n$ processes. It follows that there can be at most $n$ nodes in each level of the VHT. Thus, in $O(n)$ levels, at most $O(n^2)$ red edges can connect a node to its parent: These are the edges corresponding to the cycles $C_v$, which are drawn in green in \cref{fig:ht2}. In the following, we will count the remaining red edges.

For any $t\geq 1$, let $L_t$ be the level of the VHT of index $t$. For $1\leq i\leq |L_t|$, let $d_{t,i}$ be the degree of the $i$th node of \texttt{LevelGraph} when the construction of level $L_{t+1}$ is finished. Since \texttt{LevelGraph} is a tree on $|L_t|\leq n$ nodes, we have $\sum_{i=1}^{|L_t|} d_{t,i}\leq 2n$.

Let $v_{t,i}$ be the $i$th node in $L_t$, and let $c_{t,i}$ be the number of children of $v_{t,i}$. Note that there can be at most $n$ branches in the VHT, because each level has at most $n$ nodes, and the number of nodes per level cannot decrease. Since $c_{t,i}-1$ is the number of \emph{new} branches stemming from $v_{t,i}$, we have $\sum_{t=1}^{\infty}\sum_{i=1}^{|L_t|} (c_{t,i} - 1) \leq n$.

Let $v_{t,i},v_{t,j}\in L_t$. Observe that there is an edge in \texttt{LevelGraph} connecting these two nodes if and only if some children of $v_{t,i}$ are connected to $v_{t,j}$ by red edges of the VHT; the number of these red edges is at most $c_{t,i}$. Therefore, the red edges connecting $L_t$ and $L_{t+1}$ that are incident to the children of $v_{t,i}$ are at most $c_{t,i}\cdot d_{t,i}$. Now, let $R_m$ be the total number of red edges in the first $m$ levels of the VHT. Defining $c^\ast_{t}=\max_{i=1}^{|L_t|} c_{t,i}$, we have $\sum_{t=1}^{\infty} (c^\ast_t-1)\leq n$, and thus
$$R_m\leq \sum_{t=1}^{m}\sum_{i=1}^{|L_t|} c_{t,i}d_{t,i} \leq \sum_{t=1}^{m}c^\ast_t\sum_{i=1}^{|L_t|} d_{t,i}\leq 2n \sum_{t=1}^{m} c^\ast_t \leq 2n\left(m + \sum_{t=1}^{m} (c^\ast_t-1) \right)\leq 2n(m+n).$$
Plugging $m=O(n)$, we obtain our claim.
\end{proof}

By comparison, the first $O(n)$ levels of a generic history tree may contain $\Theta(n^3)$ red edges: Think of a network that makes all $n$ processes distinguishable at the first round and has the topology of the complete graph $K_n$ in all subsequent rounds.

\myskip
\mypar{Main theorem.} We will now prove that our algorithm is correct and runs in $O(n^3\log n)$ rounds.

\begin{lemma}\label{l:maxdiam}
The value of the local variable \texttt{DiamEstimate} never exceeds $4n$.
\end{lemma}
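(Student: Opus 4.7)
The plan is to bound how many reset phases can occur. Since $\texttt{DiamEstimate}$ starts at $1$ and is doubled only at the end of a reset phase (see \als{6}{34--41}), its value never decreases over time, and it therefore suffices to show that no reset phase can start once $\texttt{DiamEstimate}\geq n-1$. Given the doubling update rule, the maximum value attained is then at most twice the largest power of two strictly less than $n-1$, which is strictly less than $2n$ and in particular at most $4n$.

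The core technical step is to prove that in a $1$-union-connected network on $n$ processes, a broadcast phase of duration $d\geq n-1$ delivers the message of globally highest priority present at the start of the phase to every process participating in the broadcast. I would prove this by the standard cut-expansion induction: letting $S_r$ denote the set of processes currently holding the top message after the $r$th round of the phase, we have $|S_0|\geq 1$, and whenever $S_r\neq V$ the connectivity of the current multigraph forces at least one link crossing the cut $(S_r,V\setminus S_r)$, so by the forwarding rule $|S_{r+1}|\geq |S_r|+1$. Hence $|S_{n-1}|=n$. Although several message types circulate concurrently at different priorities, the forwarding rule in \als{3}{20--38} always keeps the message of globally highest priority regardless of type, so the same argument applies uniformly.

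Combining this with \cref{l:diam,l:phases}, if $\texttt{DiamEstimate}\geq n-1$ at the start of a VHT broadcast phase, then both the VHT broadcast and the subsequent acknowledgment broadcast succeed, so every non-error process ends up with $\texttt{VHTMessage}=\texttt{AckMessage}$ and no new error phase is triggered (see \als{2}{21--22}). Moreover, the reset phase that first raises $\texttt{DiamEstimate}$ to a value $\geq n-1$ runs its Reset broadcast for exactly that new estimate, so the Reset message reaches every process in time and, having higher priority than the matching Error message (see \cref{s:3.2}), overwrites all pending error traffic; this includes the error messages from any process that had not yet exited its error phase. Hence after that reset no Error message ever survives or is generated in the system, the leader never receives another Error, and no further reset phase is initiated, establishing the claim.

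The main obstacle is verifying the token-forwarding lemma cleanly in the presence of concurrent error and reset traffic at different priority levels: one has to check that the maximum-priority selection rule really does yield the monotonic cut expansion sketched above regardless of which message type currently holds the top priority, and that error processes, although they do not track \texttt{DiamEstimate} in the same way, still participate in the forwarding and thus do not obstruct delivery of the Reset message. Once that is settled, the remainder is a short induction on the number of completed reset phases together with the doubling rule.
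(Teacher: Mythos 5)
Your overall skeleton matches the paper's: bound the number of resets by observing that the dynamic diameter of a connected $n$-process network is below $n$, that a broadcast phase of duration at least $n-1$ cannot be faulty (via the standard cut-expansion argument), and that \texttt{DiamEstimate} doubles only at the end of a reset phase. However, there is a genuine gap in the step where you claim that the reset phase which first raises \texttt{DiamEstimate} to a value $\geq n-1$ ``overwrites all pending error traffic'' and that consequently ``no further reset phase is initiated.'' The priorities are interleaved as Reset $k+1$ $<$ Error $k$ $<$ Reset $k$ (see \cref{s:3.2}): a Reset message with $\texttt{ResetLevel}=k$ supersedes Error messages with $\texttt{ErrorLevel}\geq k$, but \emph{not} an Error message with a strictly smaller level, which has strictly higher priority and therefore survives the reset broadcast (and may even block the Reset message from reaching some processes). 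Such a lingering higher-priority Error can exist precisely because the broadcast that delivered an Error to the leader was run with an estimate still below the dynamic diameter, so the leader may have received an Error whose level is not the minimum among those circulating, and its Reset message is built from that non-minimal level.

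As a result, after \texttt{DiamEstimate} first enters the range $[n-1,2n)$, the leader can still receive one more (higher-priority) Error message in the following broadcast phase and trigger one additional reset; only after that reset --- whose Reset message now does dominate every circulating Error, since the preceding broadcast was long enough to deliver the highest-priority Error to the leader --- are all errors cleared and no new ones generated. This is exactly why the paper's bound is $4n$ rather than $2n$: your claimed bound of ``strictly less than $2n$'' is false in general, although the stated lemma ($\leq 4n$) is recovered once this one extra doubling is budgeted for. The rest of your argument (the cut-expansion delivery lemma and the induction on completed resets) is sound and is essentially what the paper relies on implicitly via~\cite{KLO10}.
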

\begin{proof}
It is well known that the dynamic diameter of a connected network of size $n$ is smaller than $n$ (see~\cite{KLO10}), and \texttt{DiamEstimate} is doubled after every reset phase. Thus, as soon as $n\leq \texttt{DiamEstimate}\leq 2n$, no broadcast can be faulty, assuming there are no error processes left in the network. In case there are still Error messages circulating, the leader receives the one with highest priority within the next broadcast phase. Thus, after one additional reset phase, which involves all processes, $\texttt{DiamEstimate}\leq 4n$ and no further resets can occur.
\end{proof}

\begin{theorem}\label{t:main}
The Counting algorithm of \cref{s:3} is correct and runs in $O(n^3 \log n)$ rounds.
\end{theorem}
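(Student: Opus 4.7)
The plan is to establish correctness and the $O(n^3\log n)$ running-time bound separately, leveraging the structural results already in place: Lemma~\ref{l:vhtcorr} and Corollary~\ref{c:vhtcorr} for the faithfulness of the effective VHT and its extracted view, Lemma~\ref{l:maxdiam} for the cap on \texttt{DiamEstimate}, and Lemma~\ref{l:vhtsize} for the size of the ideal VHT. Correctness then follows as a chain: Lemma~\ref{l:maxdiam} guarantees that after at most $O(\log n)$ resets the value of \texttt{DiamEstimate} stabilizes at some $d^\ast \le 4n$, no error processes remain in the network, and no subsequent broadcast is faulty; hence by Corollary~\ref{c:totalagree} each further broadcast phase pair unanimously advances the effective VHT. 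Once $3n$ levels have been constructed, the leader's view extracted in \al{2}{30} is isomorphic to its view in the ideal VHT, which is the history tree of the virtual network $\mathcal N$. Because $\mathcal N$ is connected by construction (via the spanning tree $S$ and the cycles $C_v$ of \cref{s:3.4}) and has the same $n$ processes as $\mathcal G$, the Counting algorithm of~\cite{DV22} applied to this view returns $n$, so the leader outputs $n$ and terminates.

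For the running time, the plan is to partition the rounds into begin rounds, broadcast phase pair rounds, and reset-and-wait rounds. The $O(\log n)$ resets each cost $O(n)$ rounds, for a combined reset overhead of $O(n \log n)$. The dominating contribution comes from broadcast phase pairs. Each pair at $\texttt{DiamEstimate} = d$ takes $2d = O(n)$ rounds and either advances the leader's state by one unit (a red edge, a Done update, or a Level-end) or, when an Error message overtakes the leader's own broadcast, triggers the next reset. By Lemma~\ref{l:vhtsize}, the first $O(n)$ levels of the VHT contain $O(n^2)$ red edges, and the at most $n$ Done updates plus one Level-end per level contribute another $O(n^2)$ units, so the leader's effective VHT holds $O(n^2)$ units at all times. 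Since the state grows monotonically between consecutive resets, at most $O(n^2)$ broadcast phase pairs are performed at any one \texttt{DiamEstimate} value; summing over the $O(\log n)$ distinct values gives $O(n^2 \log n)$ pairs, and multiplying by $O(n)$ rounds per pair yields the claimed $O(n^3 \log n)$ bound.

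The subtlest point will be the $O(n^2)$-per-\texttt{DiamEstimate} bound on broadcast phase pairs when $d$ is too small and error detection is delayed. In the window between the first faulty broadcast at $\texttt{DiamEstimate} = d_i$ and the moment the leader finally receives an Error message, the leader continues executing broadcast pairs whose own VHTMessage and AckMessage agree; each such ``oblivious'' pair still advances the leader's state by one unit and is therefore absorbed into the $O(n^2)$ bound from Lemma~\ref{l:vhtsize}. Only the single pair in which an Error message finally supersedes the leader's own broadcast fails to advance the state, and it directly initiates the next reset. A secondary subtlety is verifying that the extra edges $C_v$ and the sparsification via the spanning tree $S$ used to define $\mathcal N$ do not invalidate the hypotheses of the Counting algorithm of~\cite{DV22}, which requires only that the underlying network be connected at every round.
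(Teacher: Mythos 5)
Your proof follows essentially the same route as the paper's: correctness via \cref{c:vhtcorr} and the connectivity of the virtual network $\mathcal N$, and the running time by charging $O(n)$ rounds (via \cref{l:maxdiam}) to each of the $O(n^2)$ accepted Red-edge, Done, and Level-end messages (bounded via \cref{l:vhtsize,l:vhtcorr}), repeated over the $O(\log n)$ values of \texttt{DiamEstimate}. Your per-epoch accounting, including the observation that ``oblivious'' pairs executed before the leader learns of a fault still advance its state by one unit, is an equivalent reorganization of the paper's statement that every $O(n^3)$ rounds either the construction completes or a reset occurs.

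The one genuine omission is soundness against early termination. The leader calls \texttt{CountFromView} after \emph{every} completed level and outputs whatever number it returns (\als{2}{29--34}); your argument only establishes that the call made once $3n$ levels are available returns $n$, but does not rule out that an earlier call returns some wrong value $n'\neq n$, which would make the leader terminate incorrectly. The paper closes this by invoking the stronger guarantee from~\cite{DV22}: on any view of the history tree of a connected network, \texttt{CountFromView} returns either ``Unknown'' or the true $n$, never a wrong number. Combined with \cref{c:vhtcorr} --- which holds at every invocation, not only after $3n$ levels --- this rules out early incorrect termination. You already have both ingredients; you just need to apply them to every call rather than only the final one.
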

\begin{proof}
By construction, the virtual network $N_t$ is connected for every $t\geq 1$. Indeed, with the notation used in \cref{s:3.4}, the minor of $N_t$ obtained by contracting the edges of the cycles $C_v$ is isomorphic to the tree $S$.

Also, by \cref{c:vhtcorr}, the function \texttt{CountFromView} is always called on a graph isomorphic to the view of the leader in the virtual network. As proved in~\cite{DV22}, when this function is given as input a view of the history tree of a connected network, it either returns ``Unknown'' or the number of processes in the network, $n$. Moreover, if the input view spans at least $3n$ levels, the returned value is necessarily $n$. This also implies that the algorithm cannot terminate early with an incorrect result.

Hence, we only have to prove that the first $3n$ levels of the effective VHT are constructed within the desired number of rounds.

By \cref{l:vhtsize}, the first $3n$ levels of the ideal VHT contain $O(n^2)$ red edges. This is also true of the effective VHT, because it is isomorphic to a subtree of the ideal VHT by \cref{l:vhtcorr}. Thus, it takes $O(n^2)$ broadcasts to construct the red edges in the first $3n$ levels of the effective VHT (via \texttt{UpdateTempVHT}). It also takes $O(n^2)$ broadcasts to construct all the nodes in the first $3n$ levels of the effective VHT (via \texttt{UpdateVHT}). In turn, each broadcast takes $O(n)$ rounds, because its duration is proportional to \texttt{DiamEstimate}, which is at most $4n$ by \cref{l:maxdiam}. We conclude that, every $O(n^3)$ rounds, either the first $3n$ levels of the effective VHT are constructed or there is a reset that deletes some levels.

Since \texttt{DiamEstimate} doubles at every reset and never exceeds $4n$, there can be at most $O(\log n)$ resets in total. Also, each reset phase has a duration proportional to \texttt{DiamEstimate}, hence $O(n)$ rounds. Therefore, the first $3n$ levels of the effective VHT are constructed within $O(n^3\log n)$ rounds. At this point, the leader terminates and gives the correct output, $n$.
\end{proof}

Since our Counting algorithm has a polynomial running time, any ``timestamp'' can be encoded in $O(\log n)$ bits. It is now straightforward to conclude that the algorithm works in the congested network model.

\begin{corollary}\label{c:msize}
The Counting algorithm of \cref{s:3} works in the congested network model.
\end{corollary}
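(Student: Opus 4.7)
The plan is to inspect each message type listed in \cref{s:3.2} and verify that every integer parameter it carries is bounded by a polynomial in $n$; since labels are of constant size and there are at most three parameters per message, this will automatically certify the $O(\log n)$ bound without requiring the processes to know $n$ in advance.

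First I would handle the ID parameters appearing in Begin, Done, and Red-edge messages. IDs are created one at a time by incrementing \texttt{NextFreshID} whenever \texttt{UpdateTempVHT} adds a node (\als{5}{21--24}), and by \cref{c:totalagree} all non-error processes agree on this counter. By \cref{t:main} the algorithm terminates after the effective VHT reaches level $3n$, and \cref{l:vhtsize} shows that $3n$ levels of the ideal VHT contain only $O(n^2)$ non-parent red edges; combined with at most $n$ children per level this yields $O(n^2)$ nodes, hence $O(n^2)$ IDs ever created between resets. Since by \cref{l:maxdiam} there are only $O(\log n)$ resets, each of which can redo at most polynomially many nodes, \texttt{NextFreshID} stays polynomial in $n$ throughout execution. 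For the \texttt{Mult} parameter of a Red-edge message, I would appeal directly to the standing assumption recalled in \cref{s:2} that the total number of links in each $G_t$ (with multiplicities) is polynomial in $n$, so an individual edge multiplicity also fits in $O(\log n)$ bits.

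Next I would bound the remaining parameters. The level indices \texttt{ErrorLevel} and \texttt{ResetLevel} never exceed the current value of \texttt{CurrentLevel}, which by the argument above is $O(n)$ throughout the run. The parameter \texttt{NewDiam} of a Reset message is at most $4n$ by \cref{l:maxdiam}. Finally, \texttt{StartingRound} is a round counter, and by \cref{t:main} the algorithm terminates within $O(n^3\log n)$ rounds, so this value also fits in $O(\log n)$ bits. Consequently, every message composed by the algorithm has size $O(\log n)$, and the Counting algorithm of \cref{s:3} is executable in the congested network model.

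I do not expect any genuine obstacle here; the only subtle point is confirming that the ID counter is never inflated by a super-polynomial factor through repeated resets, and this is handled by the $O(\log n)$ bound on the number of resets provided by \cref{l:maxdiam} together with the per-epoch bound supplied by \cref{l:vhtsize}.
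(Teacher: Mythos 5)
Your proof is correct and follows essentially the same route as the paper: enumerate the message parameters and bound each by a polynomial in $n$, using the link-count assumption for multiplicities, \cref{l:maxdiam} for \texttt{NewDiam}, and the $O(n^3\log n)$ running time for round numbers. The only difference is that the paper dispatches IDs, level indices, and round counters uniformly by observing that each such counter increases by at most one unit per round and is therefore bounded by the total running time, whereas you give a more detailed (but equally valid) structural bound on the number of IDs via \cref{l:vhtsize} and the $O(\log n)$ reset count.
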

\begin{proof}
Recall from \cref{s:3.2} that each message contains a constant-sized label and at most three parameters. Some parameters may be IDs of processes, indices of levels of the VHT that are or have been under construction, or round numbers. For all these parameters, the maximum possible value increases by at most one unit per round, and therefore it is at most $O(n^3\log n)$ by \cref{t:main}. Some parameters may denote multiplicities of red edges, which are bounded by a polynomial because such is our assumption about the total number of communication links in a single round (see \cref{s:2}). Finally, a parameter may denote an estimate on the dynamic diameter, which is always $O(n)$ by \cref{l:maxdiam}. Since all parameters have polynomial size, they can be included in messages of $O(\log n)$ bits, and are therefore suitable for the congested network model.
\end{proof}

\section{Extensions and Improvements}\label{s:5}

\mypar{Simultaneous termination.} In \cref{s:3} we showed how the leader can determine $n$. If we want all processes, not just the leader, to simultaneously output $n$ and terminate, we can use the following protocol. As soon as the leader knows $n$, it broadcasts a message of maximum priority containing $n$ and the current round number $c$. Any process that receives this message keeps forwarding it until round $c+n$. By this round they all have it, and thus can output $n$ and terminate simultaneously.

\myskip
\mypar{General computation.} If each process is assigned an $O(\log n)$-sized input (other than the leader flag), we can also perform general computations on the multiset of inputs. We only have to adapt the algorithm of \cref{s:3} to work with history trees whose level $L_0$ contains one node per input occurring in the system, and then run the algorithm in~\cite{DV22} on this extended history tree. This can easily be done with the techniques developed in \cref{s:3}: We can construct $L_0$ like any other level, except that we use broadcast phases to transmit inputs instead of red edges.

\myskip
\mypar{Optimized running time.} We can sightly speed up our algorithm at the cost of some additional bookkeeping. The idea is to refine the error and reset mechanism by resuming the construction of the VHT not from the \emph{level} that caused an error, but from the \emph{broadcast phase} that caused it.

The processes have to remember the order in which the leader accepted Red-edge and Done messages, as well as the Begin messages received at every begin round. When a process detects an error, it broadcasts an Error message containing not the current level number, but the number of messages that the leader has accepted up to that time. The advantage is that the reset phase can rewind the (temporary) VHT exactly to the desired point without erasing entire levels. Furthermore, by looking up the Begin messages received in the appropriate begin round, each process is also able to reconstruct its local \texttt{ObsList} at the desired time.

If a broadcast phase causes an error, the leader receives a notification after less than $n$ rounds, and therefore it only has to undo the work done in $O(n)$ rounds. Since there can be at most $O(\log n)$ resets, the total time spent in reset phases or doing work that later gets undone is at most $O(n\log n)$ rounds. Thus, the new algorithm runs in $O(n^3) +O(n\log n)=O(n^3)$ rounds.

\myskip
\mypar{Leaderless computation.} It is known that doing any non-trivial computation in a leaderless network requires some extra assumptions, for example the knowledge of an upper bound $D$ on its dynamic diameter~\cite{DVdisc}. However, knowing $D$ allows processes to reliably broadcast messages in phases of $D$ rounds. This immediately yields an extension of our algorithm to leaderless networks in $O(Dn^2)$ rounds, where no acknowledgment phases or error and reset phases are needed.

\myskip
\mypar{Disconnected networks.} We can extend our algorithm to $T$-union-connected networks, assuming the parameter $T$ is known. As discussed in~\cite{DVdisc}, the idea is to divide the sequence of rounds into blocks of size $T$. Within every block, each process keeps sending the same message and stores all incoming messages. At the end of a block, each process runs the algorithm from \cref{s:3} (or its optimization) pretending that all the stored messages arrived in a single round. This is equivalent to running the algorithm on the dynamic network $\mathcal G^\star=(G^\star_{1}, G^\star_{T+1}, G^\star_{2T+1},\dots)$, which is always connected ($G^\star_t$ is defined in \cref{s:2}). The running time is simply $O(Tn^3)$ rounds.

\section{Concluding Remarks}\label{s:6}
In this paper we have extended the theory of history trees by introducing the tools necessary for the distributed construction and transmission of history trees in the congested network model. This resulted in a new state of the art for general computation in disconnected anonymous dynamic congested networks, with or without leaders.

Our history tree construction technique leads to general algorithms whose running time is cubic in the size of the network. An immediate open problem is whether this running time can be reduced. Since our algorithm broadcasts information using a token-forwarding approach, and by virtue of the $\Omega(n^2/\log n)$ lower bound of~\cite{DPR13}, we believe that it would be unlikely to achieve a better running time without a radical change in the technique used.

Understanding whether the Counting problem has as a super-linear lower bound in congested networks is of special importance, because it would mark a computational difference between anonymous dynamic networks in the congested and non-congested models.

It would be interesting to do a thorough fine-grained tradeoff analysis of our algorithm. For instance, it is not difficult to show that, if messages have size $O(n \log n)$, the running time of our algorithm can be reduced to $O(n^2)$.

\bibliographystyle{ACM-Reference-Format}
\bibliography{congested}

\newpage
\appendix

\section{Pseudocode of the Basic Counting Algorithm}\label{a:1}

This section contains the full pseudocode for the basic Counting algorithm described in \cref{s:3}. The entry point is the function \texttt{Main} in \cref{l:2}. Each process runs an independent instance of this function and has private instances of the variables in \cref{l:1}.

The only primitives are the following functions:
\begin{itemize}
\item \texttt{Input} returns the process' input, i.e., whether the process is the leader.
\item \texttt{Output} is used to give an output. It should be called only once at the end of the execution.
\item \texttt{SendToAllNeighbors} takes a message as an argument and sends it to all processes that share a link with the caller in the current communication round.
\item \texttt{ReceiveFromAllNeighbors} returns a multiset of messages coming from all incident links. These are the messages that have been passed to the function \texttt{SendToAllNeighbors} by the neighboring processes in the current round.
\item \texttt{CountFromView} takes a view of a history tree and runs the Counting algorithm from~\cite{DV22} on it. The returned value is either ``Unknown'' or the total number of processes in a network represented by that history tree.
\end{itemize}

\lstset{style=mystyle}

\begin{lstlisting}[caption={Internal variables and their initialization\label{l:1}},captionpos=t,float=ht,mathescape=true]
# Internal variables of each process
LeaderFlag       # indicates whether the process is the leader (Boolean)
MyID             # temporary ID (non-negative integer)
NextFreshID      # next ID to be assigned (positive integer)
CurrentRound     # current communication round (non-negative integer)
VHT              # first levels of the virtual history tree
CurrentLevel     # level of the VHT being constructed (positive integer)
TempVHT          # level under construction
ObsList          # list of observations to be broadcast
LevelGraph       # auxiliary graph used in the construction of the virtual network
DiamEstimate     # guess on the dynamic diameter (integer power of 2)
NetworkSize      # exact number of processes (integer or "Unknown")

function InitializeVariables()
    LeaderFlag := Input()
    if LeaderFlag = True then MyID := 0
    else MyID := 1
    NextFreshID := 2
    CurrentRound := 0
    Root := new node
    Root.ID := -1
    Node0 := new node
    Node0.ID := 0
    Node0.Parent := Root
    Node1 := new node
    Node1.ID := 1
    Node1.Parent := Root
    VHT := history tree consisting of Root, Node0, Node1
    CurrentLevel := 1
    DiamEstimate := 1
    NetworkSize := "Unknown"
\end{lstlisting}

\begin{lstlisting}[caption={Algorithm entry point and main loop\label{l:2}},captionpos=t,float=ht,mathescape=true]
# Input: whether or not the process is the leader
# Output: the number of processes in the system, $n$

function Main()
    InitializeVariables()
    do
        do
            ReturnValue := SetUpNewLevel()
        while ReturnValue = "Restart"
        do
            OriginalMessage := MakeVHTMessage()
            VHTMessage := BroadcastPhase(OriginalMessage)
            if VHTMessage = "Restart" then goto Line 7
            if LeaderFlag then
                AckMessage := BroadcastPhase(VHTMessage)
            else
                NullMessage := new message
                NullMessage.Label := "Null"
                AckMessage := BroadcastPhase(NullMessage)
            if AckMessage = "Restart" then goto Line 7
            if AckMessage != VHTMessage then
                BroadcastError(CurrentLevel)
                goto Line 7
            if AckMessage.Label = "Edge" then
                UpdateTempVHT(AckMessage.ID1, AckMessage.ID2, AckMessage.Mult)
            if AckMessage.Label = "Done" then
                UpdateVHT(AckMessage.ID)
        while AckMessage.Label != "End"
        if LeaderFlag = True then
            View := view of the leader in VHT
            NetworkSize := CountFromView(View)     # Counting algorithm from $\text{\cite{DV22}}$
        CurrentLevel := CurrentLevel + 1
    while NetworkSize = "Unknown"
    Output(NetworkSize)
\end{lstlisting}

\begin{lstlisting}[caption={Communication and broadcast functions\label{l:3}},captionpos=t,float=ht,mathescape=true]
function EdgePriority(ID1, ID2, Mult)
    return 1 / (2^ID1 * 3^ID2 * 5^Mult)

function Priority(Message)
    if Message.Label = "Null" then return 0
    if Message.Label = "Begin" then return 1
    if Message.Label = "End" then return 2
    if Message.Label = "Done" then return 2 + (1 / Message.ID)
    if Message.Label = "Edge" then
        return 3 + EdgePriority(Message.ID1, Message.ID2, Message.Mult)
    if Message.Label = "Error" then return 4 + (1 / (2 * Message.ErrorLevel + 1))
    if Message.Label = "Reset" then return 4 + (1 / (2 * Message.ResetLevel))

function SendAndReceive(Message)
    SendToAllNeighbors(Message)
    MessageMultiset := ReceiveFromAllNeighbors()
    CurrentRound := CurrentRound + 1
    return MessageMultiset

function BroadcastStep(Message)
    TopMessage := Message
    MessageMultiset := SendAndReceive(Message)
    forall ReceivedMessage in MessageMultiset do
        if Priority(ReceivedMessage) > Priority(TopMessage) then
            TopMessage := ReceivedMessage
    return TopMessage

function BroadcastPhase(Message)
    TopMessage := Message
    for i := 1 to DiamEstimate do
        TopMessage := BroadcastStep(TopMessage)
    if TopMessage.Label = "Error" then
        HandleError(TopMessage)
        return "Restart"
    else if TopMessage.Label = "Reset"
        BroadcastReset(TopMessage)
        return "Restart"
    else return TopMessage
\end{lstlisting}

\begin{lstlisting}[caption={Creation and selection of the observations to be broadcast\label{l:4}},captionpos=t,float=ht,mathescape=true]
function SetUpNewLevel()
    BeginMessage := new message
    BeginMessage.Label := "Begin"
    BeginMessage.ID := MyID
    MessageMultiset := SendAndReceive(BeginMessage)
    ObsList := empty list
    forall (Message $\times$ Mult) in MessageMultiset do
        if Message.Label != "Begin" then
            HandleError(Message)
            return "Restart"
        if MyID != Message.ID then
            Add (Message.ID, Mult) to ObsList
    Add (MyID, 2) to ObsList
    TempVHT := empty graph
    LevelGraph := empty graph
    forall Node in level (CurrentLevel - 1) of VHT do
        Add a copy of Node to TempVHT
        Add a copy of Node to LevelGraph
    return "OK"

function MakeVHTMessage()
    Message := new message
    if (ObsList is empty) then
        if (VHT contains a node whose ID is MyID) then
            Message.Label := "End"
        else
            Message.Label := "Done"
            Message.ID := MyID
    else
        Message.Label := "Edge"
        (ID2, Mult) := first element in ObsList
        Message.ID1 := MyID
        Message.ID2 := ID2
        Message.Mult := Mult
    return Message
\end{lstlisting}

\begin{lstlisting}[caption={Construction of the virtual history tree (VHT)\label{l:5}},captionpos=t,float=ht,mathescape=true]
function FindRoot(ID)
    Node := node in TempVHT such that Node.ID = ID
    Tree := tree of TempVHT containing Node
    Root := root of TempTree
    return Root

function PreventCyclesInLevelGraph()
    forall (ID2, Mult) in ObsList do
        Root := FindRoot(MyID)
        Node1 := node of LevelGraph with the same ID as Root
        Node2 := node in LevelGraph such that Node2.ID = ID2
        if (Node1 != Node2) and ({Node1, Node2} is not an edge of LevelGraph) then
            LevelGraph' := copy of LevelGraph
            Add edge {Node1, Node2} to LevelGraph'
            if (LevelGraph' has a cycle) then Remove (ID2, Mult) from ObsList

function UpdateTempVHT(ID1, ID2, Mult)
    Node := node of TempVHT such that Node.ID = ID1
    Root1 := FindRoot(ID1)
    Root2 := FindRoot(ID2)
    Child := new node
    Child.ID := NextFreshID
    NextFreshID := NextFreshID + 1
    Child.Parent := Node
    Add red edge {Child, Root2} $\times$ Mult to TempVHT
    if (MyID = ID1) and ((ID2, Mult) is in ObsList) then
        Remove (ID2, Mult) from ObsList
        MyID := Child.ID
    LGNode1 := node of LevelGraph with the same ID as Root1
    LGNode2 := node of LevelGraph with the same ID as Root2
    if (LGNode1 != LGNode2) and ({LGNode1, LGNode2} is not an edge of LevelGraph) then
        Add edge {LGNode1, LGNode2} to LevelGraph
        PreventCyclesInLevelGraph()

function UpdateVHT(ID)
    TempNode := node of TempVHT such that TempNode.ID = ID
    TempRoot := FindRoot(TempNode.ID)
    VHTParent := node of VHT with the same ID as TempRoot
    VHTChild := new node
    VHTChild.ID := TempNode.ID
    VHTChild.Parent := VHTParent
    IterNode := TempNode
    while IterNode != TempRoot do
        forall {Node1, Node2} $\times$ Mult in {red edges of TempVHT} do
            if Node1.ID = IterNode.ID then
                VHTObsNode := node of VHT with the same ID as Node2
                Add red edge {VHTChild, VHTObsNode} $\times$ Mult to VHT
        IterNode := IterNode.Parent
\end{lstlisting}

\begin{lstlisting}[caption={Reacting to errors and performing resets\label{l:6}},captionpos=t,float=ht,mathescape=true]
function MakeResetMessage()
    ResetMessage := new message
    ResetMessage.Label := "Reset"
    ResetMessage.ResetLevel := CurrentLevel
    ResetMessage.StartingRound := CurrentRound
    ResetMessage.NewDiam := DiamEstimate * 2
    return ResetMessage

function HandleError(Message)
    if (Message.Label = "Error") and (Message.ErrorLevel < CurrentLevel) then
        CurrentLevel := Message.ErrorLevel
    if LeaderFlag = True then
        NullMessage := new message
        NullMessage.Label := "Null"
        for i := 0 to DiamEstimate * 2 do
            SendAndReceive(NullMessage)
        ResetMessage := MakeResetMessage()
        BroadcastReset(ResetMessage)
    else BroadcastError(CurrentLevel)

function BroadcastError(ErrorLevel)
    Message := new message
    Message.Label := "Error"
    Message.ErrorLevel := ErrorLevel
    while Message.Label != "Reset" do
        Message := BroadcastStep(Message)
    BroadcastReset(Message)

function BroadcastReset(ResetMessage)
    FinalRound := ResetMessage.StartingRound + ResetMessage.NewDiam
    TopMessage := ResetMessage
    while CurrentRound < FinalRound do
        TopMessage := BroadcastStep(TopMessage)
    TempRoot := FindRoot(MyID)
    VHTNode1 := node of VHT with the same ID as TempRoot
    VHTNode2 := ancestor of VHTNode1 in level (ResetMessage.ResetLevel - 1)
    MyID := VHTNode2.ID
    Delete all nodes in VHT whose level is at least ResetMessage.ResetLevel
    (Also delete all edges incident to deleted nodes)
    CurrentLevel := ResetMessage.ResetLevel
    DiamEstimate := ResetMessage.NewDiam
\end{lstlisting}

\end{document}